\def\UrlSpecials{\do\~{\kern -.15em\lower .7ex\hbox{~}\kern .04em}} \catcode`~=13 
\newcommand{\calC}{\mathcal{C}}
\newcommand{\calE}{\mathcal{E}}
\newcommand{\calK}{\mathcal{K}}
\newcommand{\calM}{\mathcal{M}}
\newcommand{\calN}{\mathcal{N}}
\newcommand{\calS}{\mathcal{S}}
\newcommand{\calT}{\mathcal{T}}
\newcommand{\calU}{\mathcal{U}}
\newcommand{\calV}{\mathcal{V}}
\newcommand{\calX}{\mathcal{X}}
\newcommand{\calY}{\mathcal{Y}}
\newcommand{\calZ}{\mathcal{Z}}
\DeclareMathAlphabet{\mathbsf}{OT1}{cmss}{bx}{n}
\DeclareMathAlphabet{\mathssf}{OT1}{cmss}{m}{sl}% slanted sans serif
\DeclareSymbolFont{bsfletters}{OT1}{cmss}{bx}{n}  
\DeclareSymbolFont{ssfletters}{OT1}{cmss}{m}{n}
\DeclareMathSymbol{\bsfGamma}{0}{bsfletters}{'000}
\DeclareMathSymbol{\ssfGamma}{0}{ssfletters}{'000}
\DeclareMathSymbol{\bsfDelta}{0}{bsfletters}{'001}
\DeclareMathSymbol{\ssfDelta}{0}{ssfletters}{'001}
\DeclareMathSymbol{\bsfTheta}{0}{bsfletters}{'002}
\DeclareMathSymbol{\ssfTheta}{0}{ssfletters}{'002}
\DeclareMathSymbol{\bsfLambda}{0}{bsfletters}{'003}
\DeclareMathSymbol{\ssfLambda}{0}{ssfletters}{'003}
\DeclareMathSymbol{\bsfXi}{0}{bsfletters}{'004}
\DeclareMathSymbol{\ssfXi}{0}{ssfletters}{'004}
\DeclareMathSymbol{\bsfPi}{0}{bsfletters}{'005}
\DeclareMathSymbol{\ssfPi}{0}{ssfletters}{'005}
\DeclareMathSymbol{\bsfSigma}{0}{bsfletters}{'006}
\DeclareMathSymbol{\ssfSigma}{0}{ssfletters}{'006}
\DeclareMathSymbol{\bsfUpsilon}{0}{bsfletters}{'007}
\DeclareMathSymbol{\ssfUpsilon}{0}{ssfletters}{'007}
\DeclareMathSymbol{\bsfPhi}{0}{bsfletters}{'010}
\DeclareMathSymbol{\ssfPhi}{0}{ssfletters}{'010}
\DeclareMathSymbol{\bsfPsi}{0}{bsfletters}{'011}
\DeclareMathSymbol{\ssfPsi}{0}{ssfletters}{'011}
\DeclareMathSymbol{\bsfOmega}{0}{bsfletters}{'012}
\DeclareMathSymbol{\ssfOmega}{0}{ssfletters}{'012}
\DeclareMathOperator*{\argmax}{arg\,max}
\DeclareMathOperator{\var}{\mathsf{Var}}
\newtheorem{theorem}{Theorem} 
\newtheorem{lemma}[theorem]{Lemma}
\newtheorem{remark}{Remark}
\newcommand{\qednew}{\nobreak \ifvmode \relax \else
      \ifdim\lastskip<1.5em \hskip-\lastskip
      \hskip1.5em plus0em minus0.5em \fi \nobreak
      \vrule height0.75em width0.5em depth0.25em\fi}
\def \E{\operatorname{E}}
\def \var{\operatorname{Var}}
\begin{document}
\title{ Covert Communication with Channel-State Information at the Transmitter}%

\author{\IEEEauthorblockN{%
 Si-Hyeon Lee, Ligong Wang,  Ashish Khisti, and Gregory W. Wornell}\thanks{S.-H. Lee is with the Department of Electrical Engineering, Pohang University of Science and Technology (POSTECH), Pohang, South Korea 37673 (e-mail: sihyeon@postech.ac.kr). L. Wang is with ETIS--Universit\'e Paris Seine, Universit\'e de Cergy-Pontoise, ENSEA, CNRS, France (e-mail: ligong.wang@ensea.fr).  A. Khisti is with the Department of Electrical and Computer Engineering, University of Toronto, Toronto,  ON M5S, Canada (e-mail: akhisti@comm.utoronto.ca). G. W. Wornell  is with the  Department of Electrical Engineering and Computer Science, Massachusetts Institute of Technology (MIT), Cambridge, MA, USA (e-mail: gww@mit.edu). The material in this paper was  presented in part at IEEE ISIT 2017. }}
 
\maketitle

\begin{abstract}
We consider the problem of covert communication over a state-dependent channel, where the transmitter has causal or noncausal knowledge of the channel states. Here, ``covert'' means that a warden on the channel should observe similar statistics when the transmitter is sending a message and when it is not.  When a sufficiently long secret key is shared between the transmitter and the receiver, we derive closed-form formulas for the maximum achievable covert communication rate (``covert capacity'') for discrete memoryless channels and, when the transmitter's channel-state information (CSI) is noncausal, for additive white Gaussian noise (AWGN) channels. For certain channel models, including the AWGN channel, we show that the covert capacity is positive with CSI at the transmitter, but is zero without CSI.  We also derive lower bounds on the rate of the secret key that is needed for the transmitter and the receiver to achieve the covert capacity.
\end{abstract}

\section{Introduction}
Covert communication \cite{bashgoekeltowsley13,chebakshijaggi13,wangwornellzheng16,bloch16} refers to scenarios where the transmitter and the receiver must keep the warden (eavesdropper) from discovering the fact that they are using the channel to communicate. Specifically, the signals observed by the warden must be statistically close to the signals when the transmitter is switched off. For additive white Gaussian noise (AWGN) channels, the transmitter being switched off is usually modeled by it always sending zero; for discrete memoryless channels (DMCs), this is modeled by it sending a specially designated ``no input'' symbol $x_0$. For a DMC, if the output distribution at the warden generated by $x_0$ is a convex combination of the output distributions generated by the other input symbols, then a positive covert communication rate is achievable; otherwise the maximum amount of information that can be covertly communicated scales like the square root of the total number of channel uses \cite{wangwornellzheng16}. For the AWGN channel, the latter situation applies~\cite{bashgoekeltowsley13,wangwornellzheng16}.

The role played by \emph{channel uncertainties} in covert communications has been studied in some recent works. In particular, \cite{chebakshichanjaggi14,7084182,SobersTowley:arxiv16} consider the situation where the noise level  (or cross-over probability) of the channel is random, remains constant throughout the entire communication duration, and is unknown to the warden. In this case, it is difficult for the warden to tell whether what it observes is signal or noise. As a consequence, positive covert communication rates are achievable on certain channel models (binary symmetric channels are considered in \cite{chebakshichanjaggi14} and AWGN channels in \cite{7084182,SobersTowley:arxiv16}) which, without the unknown-noise-level assumption, only allow square-root scaling for covert communication.

The current work studies the benefit of channel uncertainties for covert communications in a different context. We consider channels with a random state that is independent and identically distributed (IID) across different channel uses. Clearly, if a channel state sequence is not known to any terminal, then it can be treated as part of the channel statistics, reducing the problem to the one studied in previous works. Hence, in general, an IID unknown parameter cannot help the communicating parties to communicate covertly. In the current work, we assume that the state sequence is known to the transmitter, either causally or noncausally, as channel-state information (CSI), but unknown to the receiver and the warden. 
As one motivating application consider a scenario where a noise source or jammer continuously emits IID random noise. If the jammer is friendly and reveals the predetermined noise symbols to the transmitter, then the transmitter can use this knowledge as CSI.
Another scenario is when the path delay from the jammer to the receiver and the warden is larger than the total path delay from the jammer to the transmitter and from the transmitter to the receiver and the warden so that the transmitter knows the jammer's signal in advance and utilizes it as CSI. In the literature, such difference in path delays has motivated the study of lookahead relay channels \cite{4305394,6675764}, \cite[Chapter 16.9]{ElGamalKim:11}. 

% 
%The causal or noncausal knowledge of the noise symbols at the transmitter may result from a difference between the arrival times of the the signal from the noise source to the transmitter and to the receiver and the warden, i.e., the signal arrives at the transmitter earlier than the receiver and the warden. Other examples include the case 

We study the maximum achievable rate for covert communication, which we call the ``covert capacity,'' in this case. We derive closed-form formulas for the covert capacity, when the transmitter and the receiver share a sufficiently long secret key. We also derive upper bounds on the minimum length of the secret key needed to achieve the covert capacity. We do not have a good lower bound on this  minimum secret-key length; we briefly comment on this in the concluding section. Our converse proofs are based on classical techniques, while covertness is accounted for with help of continuity properties of certain information quantities. Our achievability proof for the noncausal case is based on ``likelihood encoding'' employed in \cite{7707384} rather than standard Gelfand-Pinsker coding \cite{GelfandPinsker:80}, because the former admits easier covertness analysis. For the binary symmetric channel (BSC) and the AWGN channel, in certain parameter ranges, we show the covert capacity to be positive with CSI at the transmitter. (Recall that, without channel state, the covert capacity is zero for both channels in all parameter ranges.)

Our work is closely related to some works in steganography \cite{fridrich09,wangmoulin08,ezzeddinemoulin09}. In steganography, the transmitter is given some data, called the ``cover text,'' and attempts to embed its message by modifying the cover text. As pointed out in \cite{ezzeddinemoulin09}, the cover text can be seen as CSI that is noncausally known to the transmitter. The main difference between such problems and our setting is the following. In steganography it is normally assumed that no noise is imposed on the ``stegotext''---the data after modification by the transmitter, hence, conditional on the states (i.e., the cover text), the channel is noiseless. In our setting, the channel has both states and noise.

The rest of this paper is arranged as follows: Section~\ref{sec:model} formally defines the covert communication problem;  Section~\ref{sec:main} states the main results for DMCs; Sections~\ref{sec:converse} and~\ref{sec:achie} prove the converse and achievability parts of the main results, respectively; Section~\ref{sec:examples} applies the results to BSCs and AWGN channels; and Section~\ref{sec:conclusion} concludes the paper with some remarks.

%%%%%%%%%%%%%%%%%%%%%%%%%%%%%%%%%%%%%%%%%%%%%%%%%%%%
\section{Problem Formulation} \label{sec:model}
%In covert communication with noncausal states, a transmitter wishes to send a message reliably to a receiver with the help of secret key shared between the transmitter and the receiver and noncausal state sequence available at the transmitter, while ensuring covertness with respect to a warden. Formally,
\begin{figure}
\center \includegraphics[width=4in,clip]{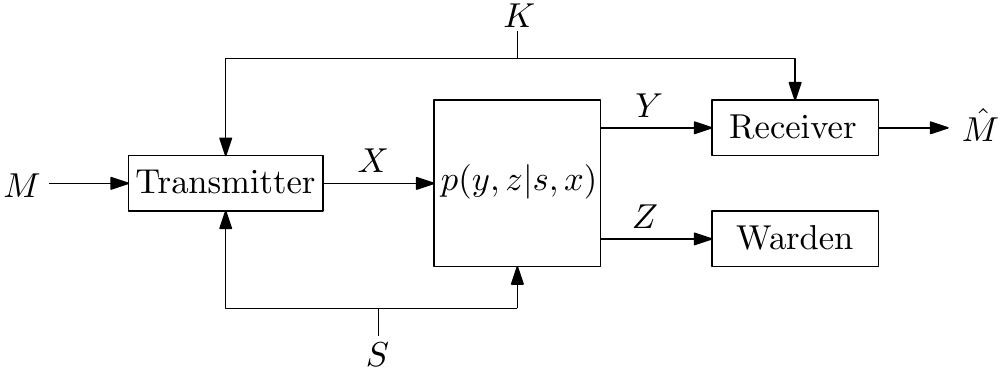}
\caption{State-dependent discrete memoryless channel} \label{fig:model}
\end{figure}

A state-dependent DMC in Fig. \ref{fig:model}
\begin{align}
(\calX, \calS, \calY, \calZ, P_S, P_{Y,Z|S,X})
\end{align}
consists of channel input alphabet $\calX$, state alphabet $\calS$, channel output alphabets $\calY$ and $\calZ$ at the receiver and the warden, respectively, state probability mass function (PMF) $P_S$, and channel law $P_{Y,Z|S,X}$.  
All alphabets are finite. Let $x_0\in \calX$ be a ``no input'' symbol that is sent when no communication takes place. Define $Q_0(\cdot)=\sum_{s\in \calS}P_S(s)P_{Z|S,X}(\cdot|s$, $x_0)$ and let $Q_0^{\times n}(\cdot)$ denote the $n$-fold product of $Q_0$. The state sequence $S^n$ is assumed to be IID, hence the warden observes $Z^n$ distributed according to $Q_0^{\times n}(\cdot)$ if no communication takes place over $n$ channel uses.  
  We define a nonnegative cost $b(x)$ for each input symbol $x\in \calX$. 
The average input cost of $x^n \in \calX^n$ is defined as $b(x^n)=\frac{1}{n}\sum_{i=1}^n b(x_i)$. 

The transmitter and the receiver are assumed to share a secret key $K$ uniformly distributed over a set $\calK$. The state sequence is assumed to be unknown to the receiver and the warden,  but available to the transmitter. We consider two cases, where the state sequence is known to the transmitter causally and noncausally, respectively. For causal CSI,  an $(|\calM|, |\calK|, n)$ code consists of an encoder at the transmitter that maps $(M,K,S^i)$ to $X_i \in \calX$ for $i  \in [1:n]$, and a decoder at the receiver that maps $(Y^n, K)$  to $\hat{M} \in \calM$. For noncausal CSI, an $(|\calM|, |\calK|, n)$ code consists of an encoder at the transmitter that maps $(M,K,S^n)$ to $X^n\in \calX^n$, and a decoder at the receiver that maps $(Y^n, K)$  to $\hat{M} \in \calM$. 
%We assume  an average input cost constraint 
%\begin{align}
%\lim_{n\rightarrow \infty}\frac{1}{n}\sum_{i=1}^n \E_{M, K, S^n}\left[b(X_i)\right]\leq B \label{eqn:power}
%\end{align}
%for nonnegative cost function $b(x)$ for $x\in \calX$. The input constraint is assumed to be satisfied when no communication takes places, i.e., $b(x_0)\leq B$.  

The transmitter and the receiver aim at constructing a code that is both reliable and covert. As usual, their code is reliable if the probability of error $P_e^{(n)}=P(\hat{M}\neq M)$ is negligible. Their code is covert if it is hard for the warden to determine whether the transmitter is sending a message (hypothesis $H_1$) or not (hypothesis $H_0$). Let $\alpha$ and $\beta$ denote the probabilities of false alarm (accepting $H_1$ when the transmitter is not sending a message) and missed detection (accepting $H_0$ when the transmitter is sending a message), respectively. Note that a blind test satisfies $\alpha+\beta=1$. Let  $\widehat{P}_{Z^n}$ denote the distribution   observed by the warden when the transmitter is sending a message.\footnote{Note that $\widehat{P}_{Z^n}$ depends on the code used for the communication and is in general not IID. } The warden's optimal hypothesis test satisfies $\alpha+\beta\geq 1-\sqrt{D(\widehat{P}_{Z^n}\|Q_0^{\times n})}$ (see \cite{LehmannRomano:05}). Hence, covertness is guaranteed if $D(\widehat{P}_{Z^n}\|Q_0^{\times n})$ is negligible. 
At this point, note that  an input symbol $x$ with $\mbox{supp}(P_Z(\cdot|x))\notin \mbox{supp}(Q_0)$, where $\mbox{supp}$ denotes the support set of a distribution, should not be transmitted with nonzero probability because otherwise $D(\widehat{P}_{Z^n}\|Q_0^{\times n})$ becomes infinity. Hence, by dropping such input symbols, we assume that 
\begin{equation}
\mbox{supp}(Q_0)=\mathcal{Z}. \label{eq:suppQ0}
\end{equation}

Let $\calK=[1:2^{nR_K}]$ and $\calM=[1:2^{nR}]$ for $R_K\geq 0$ and $R\geq 0$.  For given $R_K\geq 0$ and $B\geq 0$, a covert rate of $R$ is said to be achievable if there exists a sequence of $(2^{nR},2^{nR_K}, n)$ codes that simultaneously satisfies the input cost constraint  $\limsup_{n\rightarrow \infty}\E_{M, K, S^n}\left[b(X^n)\right]\leq B$,  reliability constraint  $\lim_{n\rightarrow \infty}P_e^{(n)}=0$, and covertness constraint $\lim_{n\rightarrow \infty } D(\widehat{P}_{Z^n}\|Q_0^{\times n})=0$. The \emph{covert capacity} is defined as the supremum of all achievable covert rates and denoted by $C_{\mathrm{c}}$ and  $C_{\mathrm{nc}}$ for the cases with causal CSI and with noncausal CSI, respectively.
%%%%%%%%%%%%%%%%%%%%%%%%%%%%%%%%%%%%%%%%%%%%%%%%%%%%%%%%%%%%%%%%%%%%%%%
\section{Main Results for DMCs} \label{sec:main}
In this section, we present upper and lower bounds on the covert capacity  of DMCs with causal and with noncausal CSI at the transmitter. The proofs of the upper and lower bounds are provided in Sections \ref{sec:converse} and \ref{sec:achie}, respectively. 

\subsection{Causal CSI at the Transmitter}
\begin{theorem}\label{thm:converse_causal}  
For $R_K\geq 0$ and $B\geq 0$, the covert capacity with causal CSI at the transmitter is upper-bounded as 
\begin{align}
C_{\mathrm{c}}\leq \max I(V;Y)  \label{eqn:main_cv_causal}
\end{align}
where the maximum is over PMF $P_{V}$ and function $x(v,s)$ such that $|\calV|\leq \min\{|\calX|+|\calY|+|\calZ|-2, (|\calX|-1)\cdot |\calS|+1  \}$, $P_Z=Q_0$ and $\E[b(X)]\leq B$.
\end{theorem}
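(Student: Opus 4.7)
The plan is to adapt the classical Shannon-strategy converse for channels with causal CSI, and then use a compactness/continuity argument to pass to a limit in which the induced output marginal equals $Q_0$ exactly.

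By Fano's inequality, the independence of $(M,K)$ from the state, the chain rule of mutual information, and enlarging each term by $S^{i-1}$ (which can only increase mutual information), I would first obtain
\[
nR \le I(M;Y^n \mid K) + n\eps_n \le \sum_{i=1}^n I(M,K,Y^{i-1},S^{i-1};Y_i) + n\eps_n.
\]
Setting $U_i := (M,K,Y^{i-1},S^{i-1})$, one has $U_i \perp S_i$ (state i.i.d.\ and independent of $(M,K)$), and the causal encoder $X_i = f_i(M,K,S^i)$ is a deterministic function of $(U_i,S_i)$. Collapse $U_i$ into its induced \emph{Shannon strategy}, the random function $V_i: \calS \to \calX$, $s \mapsto f_i(M,K,S^{i-1},s)$, which takes values in the fixed finite alphabet $\calX^{\calS}$, satisfies $V_i \perp S_i$ and $X_i = V_i(S_i)$, and is a sufficient statistic of $U_i$ for $Y_i$; hence $I(U_i;Y_i) = I(V_i;Y_i)$. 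Introducing a time-sharing index $T \sim \Unif[1:n]$ independent of everything and setting $V := V_T$, $X := X_T$, $S := S_T$, $Y := Y_T$, $Z := Z_T$, a standard time-sharing argument (noting that $Y_T$ depends on $T$ only through $V_T$, so $I(T;Y_T \mid V) = 0$) yields $R \le I(V;Y) + \eps_n$ with $V \perp S$, $X = V(S)$, and $\E[b(X)] \le B + o(1)$.

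Next, I translate the covertness constraint into a constraint on $P_Z$: by the chain rule for KL divergence and non-negativity of conditional divergence,
\[
\sum_{i=1}^n D\bigl(\widehat{P}_{Z_i}\,\big\|\,Q_0\bigr) \le D\bigl(\widehat{P}_{Z^n}\,\big\|\,Q_0^{\times n}\bigr) \to 0,
\]
and convexity of divergence yields $D(\widehat{P}_{Z_T}\,\|\,Q_0) \to 0$, i.e., the $Z$-marginal induced by $V$ equals $Q_0$ up to $o(1)$ in KL and, by Pinsker, in total variation. To upgrade this approximate constraint to the exact condition $P_Z = Q_0$, observe that $V$ lives in the fixed finite alphabet $\calX^{\calS}$, so the distributions $P_V$ lie in a compact set. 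Extracting a convergent subsequence $P_V \to P_{V^*}$, continuity of mutual information, expected cost, and KL divergence (the last holding because $\supp(Q_0) = \calZ$ by \eqref{eq:suppQ0} makes $D(\,\cdot\,\|\,Q_0)$ continuous on the full simplex) delivers a limit $(V^*, X^* = V^*(S))$ satisfying $V^* \perp S$, $P_{Z^*} = Q_0$, $\E[b(X^*)] \le B$, and $R \le I(V^*;Y^*)$, which is the claimed bound.

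I expect the main obstacle to be precisely the interplay of the approximate covertness constraint with the compactness step: the continuity of $D(\,\cdot\,\|\,Q_0)$ at the limit, which is provided exactly by the support condition $\supp(Q_0) = \calZ$ in \eqref{eq:suppQ0}, is what allows one to pass from $D \to 0$ to $D = 0$ in the limiting distribution; collapsing to Shannon strategies is essential here, since the original auxiliary $U_T$ has alphabet growing with $n$. Finally, the cardinality bound $|\calV| \le \min\{|\calX|+|\calY|+|\calZ|-2,\,(|\calX|-1)|\calS|+1\}$ follows by the Fenchel-Eggleston-Caratheodory theorem applied to the Shannon-strategy representation, preserving respectively the marginals $(P_X,P_Y,P_Z)$ together with $I(V;Y)$, or the conditional $P_{X \mid S}$ together with $I(V;Y)$.
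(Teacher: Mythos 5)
Your proposal is correct, and the first half (Fano, chain rule, adjoining $S^{i-1}$, and the observation that the resulting auxiliary is independent of $S_i$ while $X_i$ is a function of it and $S_i$) coincides with the paper's derivation up to \eqref{eqn:GP_conv_causal}; the paper removes $Y^{i-1}$ explicitly via the Markov chain $Y^{i-1}-(M,K,S^{i-1},X^{i-1})-Y_i$, whereas you remove it implicitly by collapsing onto the Shannon strategy, which amounts to the same thing. Where you genuinely diverge is in the single-letterization of the three asymptotic constraints. The paper introduces the value function $C_{\mathrm{c}}(A,B)$ (maximum rate subject to $D(P_Z\|Q_0)\leq A$ and $\E[b(X)]\leq B$), proves in Lemma~\ref{cl:Cl} that it is non-decreasing, concave, and continuous, bounds each per-letter term by $C_{\mathrm{c}}(D(\widehat{P}_{Z_i}\|Q_0),\E[b(X_i)])$, combines them via concavity, and lets $n\to\infty$ using continuity at $A=0$. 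You instead exploit the fact that the Shannon strategies live on the fixed finite alphabet $\calX^{\calS}$: time-sharing (your $I(T;Y_T\mid V_T)=0$ step is the exact counterpart of the paper's concavity) produces a single distribution $P_{V}$ per blocklength satisfying the constraints up to $o(1)$, and sequential compactness of the simplex plus continuity of $I(V;Y)$, $\E[b(X)]$, and $D(\cdot\|Q_0)$ (the last requiring \eqref{eq:suppQ0}, exactly as in the paper's continuity claim) upgrades the vanishing slack to exact feasibility in the limit. Your route is more self-contained for the causal case, avoiding the auxiliary function and its concavity proof; the paper's route has the advantage that the same $C(A,B)$ machinery carries over verbatim to the noncausal converse (Theorem~\ref{thm:converse}), where no bounded-alphabet strategy representation is available before the final cardinality reduction. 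Both proofs rest on the same superadditivity bound \eqref{eqn:ccf}--\eqref{eqn:ccl} for the covertness constraint and conclude with the support lemma for the cardinality bound.
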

\begin{theorem} \label{thm:achie_causal}
For $R_K\geq 0$ and $B\geq 0$, the covert capacity with causal CSI at the transmitter  is lower-bounded as  
\begin{align}
C_{\mathrm{c}}&\geq \max  I(V;Y) \label{eqn:main_rate_causal}
\end{align} 
where the maximum is over  PMF $P_{V}$ and function $x(v,s)$ such that $|\calV|\leq \min\{|\calX|+|\calY|+|\calZ|-1, (|\calX|-1)\cdot |\calS|+2\}$, $P_Z=Q_0$, $\E[b(X)]\leq B$, and 
\begin{align}
I(V;Z)-I(V;Y) < R_K. \label{eqn:main_key_causal}
\end{align}
\end{theorem}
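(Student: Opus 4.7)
The plan is to prove achievability by combining a Shannon-strategy random code with a soft-covering (channel resolvability) argument that simultaneously delivers decodability at the legitimate receiver and KL-indistinguishability at the warden. Fix a PMF $P_V$ and a function $x(v,s)$ satisfying the constraints of the theorem, and let $P_{Z|V}(z|v)=\sum_{s}P_S(s)P_{Z|S,X}(z|s,x(v,s))$, so that by hypothesis the induced marginal is $P_Z=Q_0$ and $\E[b(X)]\le B$; note that $Q_0$ has full support by \eqref{eq:suppQ0}, which legitimizes working with KL divergence.

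First I would generate a random codebook $\calC=\{V^n(m,k)\}$ of $2^{n(R+R_K)}$ codewords drawn i.i.d.\ from $P_V^{\otimes n}$, indexed by $(m,k)\in[1:2^{nR}]\times[1:2^{nR_K}]$, and reveal it to all parties. At time $i$ the encoder transmits $X_i=x(V_i(M,K),S_i)$ using the causally observed state, and the receiver performs joint-typicality decoding: output the unique $\hat m$ such that $(V^n(\hat m,K),Y^n)$ is jointly $P_{V,Y}$-typical. Standard packing-lemma analysis for the induced channel $P_{Y|V}(y|v)=\sum_s P_S(s)P_{Y|S,X}(y|s,x(v,s))$ gives $\Pen\to 0$ whenever $R<I(V;Y)$.

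Second, I would bound covertness by viewing $Z^n$ as the output of the memoryless channel $P_{Z|V}$ driven by a uniformly chosen index among the $2^{n(R+R_K)}$ i.i.d.\ codewords. Since the single-letter output is precisely $Q_0$, the soft-covering/channel-resolvability lemma in KL divergence yields
\begin{align}
\E_{\calC}\bigl[D(\widehat P_{Z^n}\|Q_0^{\times n})\bigr]\to 0\quad\text{exponentially in }n,
\end{align}
provided $R+R_K>I(V;Z)$. Combined with the reliability condition, this produces the strict key-rate requirement $R_K>I(V;Z)-I(V;Y)$ in \eqref{eqn:main_key_causal}. I would then extract a deterministic codebook by a union/Markov argument: the random ensemble simultaneously drives $\Pen$, $D(\widehat P_{Z^n}\|Q_0^{\times n})$, and the expected cost $\E_{\calC,M,K,S^n}[b(X^n)]-\E[b(X)]$ to zero, so some realization satisfies all three within a vanishing slack. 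Pruning codewords whose empirical cost exceeds $B+\delta_n$ costs negligible rate. The cardinality bound on $|\calV|$ follows from a standard support-lemma/Carath\'eodory argument applied to the constraints $P_Z=Q_0$, $\E[b(X)]\le B$, together with preservation of $I(V;Y)$ and of the gap $I(V;Z)-I(V;Y)$.

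The main obstacle I anticipate is the covertness step. One must verify that soft-covering holds in \emph{KL} divergence rather than merely in total variation on the effective channel $P_{Z|V}$, and must guarantee that a \emph{single} realization of the codebook is simultaneously good for reliability, covertness, and cost, since the warden sees the distribution induced by the realized code rather than the ensemble average. A secondary subtlety is that the Shannon-strategy reduction folds the state $S$ into $P_{Z|V}$ and $P_{Y|V}$, so one must verify that this marginalization preserves the mutual-information thresholds $I(V;Y)$ and $I(V;Z)$ that emerge in \eqref{eqn:main_key_causal}, while still respecting the causal (not noncausal) availability of $S^n$ at the encoder; this is exactly what forces $V$ to be independent of $S$ and $X$ to be a deterministic function of $(V,S)$.
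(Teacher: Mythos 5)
Your proposal is correct and follows essentially the same route as the paper: a Shannon-strategy codebook of $2^{n(R+R_K)}$ i.i.d.\ $P_V$-codewords, packing-lemma decoding giving $R<I(V;Y)$, and the KL-divergence form of the soft-covering theorem applied to the effective channel $P_{Z|V}$ giving $R+R_K>I(V;Z)$, which together yield \eqref{eqn:main_key_causal}. The concerns you flag (KL versus TV soft covering, and extracting a single good codebook) are resolved exactly as you anticipate—the paper invokes KL-divergence resolvability results directly and selects a realization achieving all three vanishing quantities simultaneously—while the cost constraint is handled by the typical average lemma with a small margin $\E[b(X)]\le B/(1+\epsilon)$ rather than by pruning.
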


\subsection{Noncausal CSI at the Transmitter }
\begin{theorem}\label{thm:converse}  
For $R_K\geq 0$ and $B\geq 0$, the covert capacity with noncausal CSI at the transmitter  is upper-bounded as 
\begin{align}
C_{\mathrm{nc}}\leq \max (I(U;Y)-I(U;S)) \label{eqn:main_cv}
\end{align}
where the maximum is over conditional PMF $P_{U|S}$ and function $x(u,s)$ such that $|\calU|\leq \min\{|\calX|+|\calY|+|\calZ|+|\calS|-3, |\calX|\cdot |\calS|\}$, $P_Z=Q_0$ and $\E[b(X)]\leq B$.
\end{theorem}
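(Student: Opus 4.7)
The plan is to run a Gelfand--Pinsker-style converse while (i) absorbing the secret key $K$ together with the message $M$, so the argument parallels the keyless GP converse; (ii) converting the asymptotic covertness requirement into the single-letter equality constraint $P_Z=Q_0$; and (iii) bringing the bound into the stated functional form with the prescribed cardinality bound on $\calU$.

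Applying Fano's inequality conditioned on $K$ and using $I(M;Y^n|K)\le I(M,K;Y^n)$, I obtain $nR\le I(M,K;Y^n)+n\epsilon_n$. Since $(M,K)\indep S^n$ we have $I(M,K;S^n)=0$, so $nR-n\epsilon_n\le I(M,K;Y^n)-I(M,K;S^n)$. Expanding this difference by chain rule (forward on $Y$, backward on $S$), the cross terms cancel via the Csisz\'ar sum identity
$$\sum_{i=1}^n I(S_{i+1}^n;Y_i\mid M,K,Y^{i-1})=\sum_{i=1}^n I(Y^{i-1};S_i\mid M,K,S_{i+1}^n),$$
and with $U_i:=(M,K,Y^{i-1},S_{i+1}^n)$ this gives
$$I(M,K;Y^n)-I(M,K;S^n)=\sum_{i=1}^n\bigl[I(U_i;Y_i\mid Y^{i-1})-I(U_i;S_i)\bigr]\le \sum_{i=1}^n\bigl[I(U_i;Y_i)-I(U_i;S_i)\bigr],$$
using $S_i\indep S_{i+1}^n$ to drop $S_{i+1}^n$ from the $S$-side conditioning and $Y^{i-1}\subset U_i$ to drop $Y^{i-1}$ from the $Y$-side. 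A time-sharing variable $T\sim\Unif[1:n]$ independent of everything, together with $U:=(U_T,T)$, $X:=X_T$, $Y:=Y_T$, $S:=S_T$, $Z:=Z_T$, single-letterizes this to $R\le I(U;Y)-I(U;S)+\epsilon_n$ (using $I(T;S)=0$ by IID-ness of $S^n$), and $\E[b(X)]\le B$ follows by averaging.

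Covertness is promoted to an exact single-letter constraint: the chain rule and convexity of relative entropy give $\sum_i D(\widehat{P}_{Z_i}\|Q_0)\le D(\widehat{P}_{Z^n}\|Q_0^{\times n})\to 0$, so a further convexity step yields $D(P_Z\|Q_0)\to 0$ with $P_Z=\frac{1}{n}\sum_i \widehat{P}_{Z_i}$. Once $|\calU|$ is bounded (see below), the joint distribution of $(U,X,S,Y,Z)$ lies in a compact finite-dimensional simplex; extracting a convergent subsequence of code-induced distributions and using continuity of mutual information and of KL divergence produces a limit distribution satisfying $P_Z=Q_0$ exactly, $\E[b(X)]\le B$, and $R\le I(U;Y)-I(U;S)$.

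Finally, to match the stated functional form and cardinality bound, Willems' functional representation lemma lets me replace $U$ by $U':=(U,W)$ with $W$ independent of $(U,S)$ such that $X=x(U',S)$ is deterministic; since $W\indep(U,S)$, $I(U';S)=I(U;S)$ and $I(U';Y)\ge I(U;Y)$, preserving the bound. A Fenchel--Eggleston--Carath\'eodory support-lemma argument preserving $P_X$, $P_Y$, $P_Z$, $P_S$, and the nonlinear functional $I(U;Y)-I(U;S)$ then yields $|\calU|\le|\calX|+|\calY|+|\calZ|+|\calS|-3$, and the complementary bound $|\calU|\le|\calX|\cdot|\calS|$ follows from a direct alphabet-reduction argument. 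I expect the main technical obstacle to be the rigorous passage from the asymptotic covertness condition to the exact constraint $P_Z=Q_0$, since this forces the cardinality reduction to be performed \emph{before} the compactness/continuity step so that the limit joint distribution sits in a fixed finite-dimensional simplex.
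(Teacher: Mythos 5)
Your proposal is correct, and its core is the same as the paper's: Fano's inequality, the Gelfand--Pinsker/Csisz\'ar-sum manipulation with the identical auxiliary $U_i=(M,K,Y^{i-1},S_{i+1}^n)$, the superadditivity bound $\sum_i D(\widehat{P}_{Z_i}\|Q_0)\le D(\widehat{P}_{Z^n}\|Q_0^{\times n})$, and a support-lemma cardinality reduction. Where you diverge is in the two ``finishing'' steps, and both of your alternatives work. First, the paper single-letterizes by introducing the function $C_{\mathrm{nc}}(A,B)=\max\{I(U;Y)-I(U;S)\}$ subject to $D(P_Z\|Q_0)\le A$ and $\E[b(X)]\le B$, proving (Lemma~\ref{cl:Cl}) that it is concave, non-decreasing, and continuous, and then letting $(A,B)\to(0,B)$; you instead use a time-sharing variable plus a compactness/subsequence argument on the reduced-alphabet joint distributions. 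These encode the same underlying facts (convexity of $D(\cdot\|Q_0)$, continuity guaranteed by $\supp(Q_0)=\calZ$), but the paper's route avoids the ordering subtlety you correctly flag --- it never needs a uniform alphabet bound before taking limits, since continuity of $C_{\mathrm{nc}}$ at $(0,B)$ does all the work; your route, by contrast, makes the ``exact $P_Z=Q_0$ in the limit'' step very explicit. Second, to restrict to deterministic $x(u,s)$, the paper observes that $I(U;Y)-I(U;S)$ is convex in $P_{X|U,S}$ so the maximum is attained at an extreme point, whereas you invoke the functional representation lemma and enlarge $U$ to $(U,W)$; your version requires re-running the support lemma afterwards (which you do), while the paper's convexity observation leaves $\calU$ untouched and is slightly cleaner here.
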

\begin{theorem} \label{thm:achie}
For $R_K\geq 0$ and $B\geq 0$, the covert capacity with noncausal CSI at the transmitter  is lower-bounded as  
\begin{align}
C_{\mathrm{nc}}&\geq \max (I(U;Y)-I(U;S)) \label{eqn:main_rate}
\end{align} 
where the maximum is over conditional PMF $P_{U|S}$ and function $x(u,s)$ such that $|\calU|\leq \min\{|\calX|+|\calY|+|\calZ|+|\calS|-2, |\calX|\cdot |\calS|+1\}$, $P_Z=Q_0$, $\E[b(X)]\leq B$, and 
\begin{align}
I(U;Z)-I(U;Y) < R_K. \label{eqn:main_key}
\end{align}
\end{theorem}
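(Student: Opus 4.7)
The plan is to adapt the likelihood-encoding framework of \cite{7707384}, originally designed for resolvability-type wiretap problems, to Gelfand--Pinsker coding augmented with a shared secret key, and then to verify reliability, covertness, and cost simultaneously.

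Fix a feasible conditional PMF $P_{U|S}$ and deterministic function $x(u,s)$, and let $P_U$ denote the induced marginal $\sum_s P_S(s) P_{U|S}(u|s)$. I would generate a random codebook of $2^{n(R + R_K + R')}$ i.i.d.\ codewords $u^n(m,k,\ell) \sim P_U^{\times n}$, indexed by $(m,k,\ell) \in [1:2^{nR}] \times [1:2^{nR_K}] \times [1:2^{nR'}]$. The likelihood encoder, on input $(m,k,s^n)$, draws $L \in [1:2^{nR'}]$ with probability proportional to $P_{S|U}^{\times n}(s^n \mid u^n(m,k,\ell))$, and transmits $X_i = x(U_i(m,k,L), S_i)$; the receiver, knowing $K$, decodes $(M,L)$ from $Y^n$ by joint typicality.

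Three conditions must then be verified. First, by the soft-covering lemma, provided $R' > I(U;S)$, the joint law of $(U^n, S^n)$ induced by the encoder is close in expected relative entropy to the i.i.d.\ product $P_{U,S}^{\times n}$, from which $\limsup_n \E[b(X^n)] \leq B$ follows. Second, reliability is standard Gelfand--Pinsker: with $K$ in hand the effective codebook has rate $R + R'$, so joint-typicality decoding succeeds when $R + R' < I(U;Y)$. Third, for covertness the warden averages over $K$, so from its viewpoint the effective codebook rate is $R + R_K + R'$; applying the relative-entropy form of soft covering to the induced channel $P_{Z|U}$, together with the feasibility constraint $P_Z = Q_0$, yields $\E_{\mathcal{C}} D(\widehat{P}_{Z^n} \| Q_0^{\times n}) \to 0$ whenever $R + R_K + R' > I(U;Z)$. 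For any $R < I(U;Y) - I(U;S)$ and any $R_K > I(U;Z) - I(U;Y)$, one can pick $R'$ with $\max\{I(U;S),\, I(U;Z) - R - R_K\} < R' < I(U;Y) - R$, and this interval is nonempty under the two hypotheses; hence all three constraints are met simultaneously, and a standard codebook-selection argument extracts a deterministic code.

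The main obstacle is to control covertness in relative entropy rather than merely in total variation while simultaneously maintaining reliability; invoking the relative-entropy form of the soft-covering lemma in \cite{7707384} keeps this step clean, once the rate margins are chosen with positive slack. The cardinality bound on $|\calU|$ follows by a Fenchel--Eggleston/Carath\'eodory argument that preserves $P_U$, $P_Y$, the warden's constraint $P_Z = Q_0$, the cost $\E[b(X)]$, and both $I(U;Y)$ and $I(U;S)$; the alternative bound $|\calU| \leq |\calX| \cdot |\calS| + 1$ is immediate from $X = x(U,S)$, as in the standard Gelfand--Pinsker argument.
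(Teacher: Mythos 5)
Your construction is the same as the paper's: a codebook of $2^{n(R+R_K+R')}$ i.i.d.\ $P_U$-codewords with a likelihood encoder, typicality decoding with the key, soft covering for covertness, the packing lemma for reliability, and Fourier--Motzkin elimination of $R'$; the final rate conditions you arrive at are correct. The one step that does not hold as you state it is the covertness claim: you assert that the relative-entropy form of soft covering applied to the ``induced channel $P_{Z|U}$'' gives $\E_{\calC}[D(\widehat{P}_{Z^n}\|Q_0^{\times n})]\to 0$ whenever $R+R_K+R'>I(U;Z)$. That would be true if the triple $(M,K,L)$ were uniform and $Z^n$ were generated from $U^n$ alone, but here $L$ is selected by the likelihood encoder as a function of $S^n$, and $Z^n$ depends on $S^n$ directly through $P_{Z|U,S}$ and $x(u,s)$, so $\widehat{P}_{Z^n}$ is \emph{not} the output distribution of a soft-covering code. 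The paper's Lemma~\ref{lemma:zz} repairs exactly this point via a triangle inequality against the ideal distribution $\Gamma^{(\calC)}$ in which $(K,M,L)$ is uniform and $(S^n,Z^n)\sim P_{S|U}^{\times n}P_{Z|U,S}^{\times n}$: soft covering controls $\|\Gamma^{(\calC)}_{Z^n}-P_Z^{\times n}\|_{\mathrm{TV}}$ under $R+R_K+R'>I(U;Z)$, while closeness of the actual encoder-induced law to $\Gamma^{(\calC)}$ (bounded through the joint law of $(S^n,Z^n)$) requires $R'>I(U;S)$ --- the condition you invoke only for the cost constraint. The same remark applies to reliability, where $R'>I(U;S)$ is what guarantees the encoder's selected codeword is jointly typical with $s^n$. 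Since your final choice of $R'$ satisfies $R'>I(U;S)$ anyway, the construction goes through once this reduction is inserted, but covertness genuinely needs both rate conditions, not just the one you cite.
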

\begin{remark} If we restrict $U$ and $S$ to be independent in Theorems \ref{thm:converse} and \ref{thm:achie}, the bounds fall back to those in Theorems \ref{thm:converse_causal} and \ref{thm:achie_causal}.  
\end{remark} 
\begin{remark}
For the case with causal CSI (resp. noncausal CSI), if $R_K$ is large enough so that \eqref{eqn:main_key_causal} (resp. \eqref{eqn:main_key}) holds under the joint distribution that achieves the maximum on the right-hand side of \eqref{eqn:main_cv_causal} (resp. \eqref{eqn:main_cv}), then Theorems~\ref{thm:converse_causal} and~\ref{thm:achie_causal} (resp. Theorems~\ref{thm:converse} and~\ref{thm:achie}) establish the covert capacity as the right-hand side of \eqref{eqn:main_cv_causal} (resp. \eqref{eqn:main_cv}). Furthermore, if under this joint distribution $I(V;Z)<I(V;Y)$ (resp. $I(U;Z) < I(U;Y)$), then no secret key is needed to achieve the covert capacity.
\end{remark}

\begin{remark}\label{rmk:nogain}
%If the warden's channel output does not depend on the state, i.e., $P_{Z|S,X}=P_{Z|X}$, then the covert capacity for both cases with causal CSI and with noncausal CSI can be positive  if and only if $x_0$ is redundant \cite{wangwornellzheng16}, i.e.,  $P_{Z|X}(\cdot|x_0)  \in \mbox{conv}\{P_{Z|X}(\cdot|x')\colon x'\in \mathcal{X}, x'\neq x_0\}$ where $\mbox{conv}$ denotes the convex hull. %Hence, when $P_{Z|S,X}=P_{Z|X}$, the covert capacity for both cases with causal CSI and with noncausal CSI at the transmitter is positive only if  it is positive without CSI. 

Let us consider the special case of  $Y=Z$ as in \cite{wangwornellzheng16}. Then, in the absence of CSI, the covert capacity can be positive  if and only if $x_0$ is redundant \cite{wangwornellzheng16}, i.e.,  $P_{Z|X}(\cdot|x_0)  \in \mbox{conv}\{P_{Z|X}(\cdot|x')\colon x'\in \mathcal{X}, x'\neq x_0\}$ where $\mbox{conv}$ denotes the convex hull. 
In the presence of CSI, the covert capacity can be positive even though $x_0$ is not redundant. Examples include channels with additive state where some fraction of state can be subtracted through appropriate precoding so that the transmitter can send message symbol (corresponding to $V$ for the causal case and to $U$ for the noncausal case) generated by taking into account the effect of subtracted state. 
In Section \ref{sec:examples}, we show such examples. %where the covert capacity is positive when causal or noncausal CSI  is given at the transmitter, while it is zero in the absence of  CSI. 
\end{remark}

%%%%%%%%%%%%%%%%%%%%%%%%%%%%%%%%%%%%%%%%%%%%%%%%%%%%%%%%%

\section{Proof of Upper Bounds} \label{sec:converse}
In this section, we prove the converse part of our main results, i.e., Theorems \ref{thm:converse_causal} and \ref{thm:converse}. Let us first define the following functions of nonnegative $A$ and $B$: 
\begin{align*}
C_{\mathrm{c}}(A,B)
&=\max_{\substack{P_{V},~x(v,s)\colon \\ \E[b(X)]\leq B,~ D(P_{Z}\|Q_0)\leq A} } I(V;Y) \\
C_{\mathrm{nc}}(A,B)
&=\max_{\substack{P_{U|S},~P_{X|U,S}\colon \\ \E[b(X)]\leq B,~ D(P_{Z}\|Q_0)\leq A} } (I(U;Y)-I(U;S)). 
\end{align*}
In the proof of Theorems \ref{thm:converse_causal} and \ref{thm:converse}, we use the following lemma, which is proven at the end of this section.
\begin{lemma}\label{cl:Cl}
The functions $C_{\mathrm{c}}(A,B)$ and  $C_{\mathrm{nc}}(A,B)$ are non-decreasing in each of $A$ and $B$, and concave and continuous in $(A,B)$. 
\end{lemma}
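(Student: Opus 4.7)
The plan is to establish the three claimed properties in turn.

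\emph{Monotonicity} is immediate: if $A' \geq A$ and $B' \geq B$, every tuple $(P_V, x(v,s))$ (resp.\ $(P_{U|S}, x(u,s))$) feasible at $(A,B)$ is also feasible at $(A',B')$, so the maximum cannot decrease.

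For \emph{concavity}, I would use a standard time-sharing argument. Fix $\lambda \in [0,1]$ and let feasible tuples achieve $C_{\mathrm{c}}(A_i,B_i)$ (resp.\ $C_{\mathrm{nc}}(A_i,B_i)$) for $i=1,2$. Introduce a time-sharing variable $Q$, independent of $S$, with $P(Q=1)=\lambda$, and take the new auxiliary to be $V'=(Q,V)$ in the causal case (and $U'=(Q,U)$ in the noncausal case), with $X = X_Q$ a deterministic function of $(V',S)$ (resp.\ $(U',S)$). Feasibility follows because $\E[b(X)]$ is linear in the mixing weight and $D(\cdot\|Q_0)$ is convex in its first argument. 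For the objective, in the causal case I would bound
\begin{align*}
I(V';Y) \geq I(V;Y\mid Q) = \lambda\, I(V_1;Y_1) + (1-\lambda)\, I(V_2;Y_2),
\end{align*}
and in the noncausal case
\begin{align*}
I(U';Y) - I(U';S) &\geq I(U;Y\mid Q) - I(U;S\mid Q) \\
&= \lambda\bigl(I(U_1;Y_1)-I(U_1;S)\bigr) + (1-\lambda)\bigl(I(U_2;Y_2)-I(U_2;S)\bigr),
\end{align*}
where the key step is the identity $I(Q,U;S) = I(U;S\mid Q)$, which follows from $Q \indep S$ (both tuples share the same marginal $P_S$, and $Q$ was introduced independently).

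For \emph{continuity}, concavity on the convex domain $[0,\infty)^2$ delivers continuity automatically on the open region $\{A>0,\,B>0\}$. The hard part will be continuity at boundary points with $A_0=0$ or $B_0=0$. My plan is first to invoke a Carath\'eodory-type cardinality bound so that the optimization runs over a compact subset of a finite-dimensional probability simplex, on which $D(P_Z\|Q_0)$, $\E[b(X)]$, and the mutual-information objective are jointly continuous in the joint distribution. Upper semicontinuity at $(A_0,B_0)$ then follows by extracting a convergent subsequence of maximizers along $(A_n,B_n)\to(A_0,B_0)$ and verifying that the limit is feasible at $(A_0,B_0)$. For lower semicontinuity, given a near-optimizer at $(A_0,B_0)$, I would mix it by weight $1-\eps_n$ with the trivial tuple ($V$ or $U$ constant, $X \equiv x_0$) with $\eps_n \to 0$; convexity of $D(\cdot\|Q_0)$ yields $D(P'_Z\|Q_0) \leq (1-\eps_n)\,D(P_Z\|Q_0)$, and under the standard convention $b(x_0)=0$ the cost also only decreases, so the perturbed tuple lies in the feasible set at $(A_n,B_n)$ for $n$ large while its objective tends to the original value by continuity of mutual information.

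The main obstacle I anticipate is precisely this boundary step of the continuity argument: monotonicity and concavity together yield continuity only on the interior, so one must explicitly construct feasible perturbations to handle $A_0=0$ or $B_0=0$; concavity and monotonicity themselves are routine.
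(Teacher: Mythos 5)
Your monotonicity and concavity arguments coincide with the paper's: the same enlargement-of-feasible-set observation, the same time-sharing variable $Q\indep S$ with $U'=(Q,U_Q)$, feasibility via linearity of the cost and convexity of $D(\cdot\|Q_0)$ in its first argument, and the same key identity $I(Q,U_Q;S)=I(U_Q;S\mid Q)$. Where you diverge is continuity: the paper disposes of it in one sentence, asserting that continuity of the objective and of the constraint functions in $P_{U|S}P_{X|U,S}$ (which holds because $\mathrm{supp}(Q_0)=\calZ$ keeps $D(P_Z\|Q_0)$ finite and continuous) suffices. Your treatment is more careful and, in fact, closer to what the lemma's application actually demands: the converse needs $C(\delta_n'/n,B+\delta_n)\to C(0,B)$, i.e., upper semicontinuity at the boundary point $A=0$, which concavity alone does not supply (a concave function on $[0,\infty)^2$ can jump at the boundary), and which your compactness-plus-convergent-subsequence argument over the cardinality-bounded simplex delivers. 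One small caveat: your lower-semicontinuity step leans on the convention $b(x_0)=0$, which the paper never assumes (it only requires $b\ge 0$); this does not matter for how the lemma is used, since the converse approaches $(0,B)$ from directions where monotonicity already gives the matching lower bound, but if you want the full two-sided continuity claim as stated you should either add that hypothesis explicitly or mix toward a minimum-cost feasible tuple instead of the $x_0$ tuple.
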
 
Now we are ready to prove Theorems \ref{thm:converse_causal} and \ref{thm:converse}.
\begin{proof}[Proof of Theorem \ref{thm:converse_causal}] For $R_K\geq 0$ and $B\geq 0$, consider any sequence of $(2^{nR},2^{nR_K}, n)$ codes that simultaneously satisfies the input cost constraint $\limsup_{n\rightarrow \infty}\E_{M, K, S^n}\left[b(X^n)\right]\leq B$,  reliability constraint  $\lim_{n\rightarrow \infty}P_e^{(n)}=0$, and covertness constraint $\lim_{n\rightarrow \infty } D(\widehat{P}_{Z^n}\|Q_0^{\times n})=0$. %Note that the input cost constraint implies that  $\frac{1}{n}\sum_{i=1}^n\E[b(X_i)]\leq B$.

Let us start with the proof steps used for channels with causal CSI \cite{ElGamalKim:11} without a covertness constraint: 
\begin{align}
nR&\overset{(a)}\leq I(M;Y^n|K)+n\epsilon_n\\
&=\sum_{i=1}^n I(M;Y_i|K,Y^{i-1})+n\epsilon_n\\
&\leq \sum_{i=1}^n I(M, K, Y^{i-1};Y_i)+n\epsilon_n\\
&\leq \sum_{i=1}^n I(M,K, Y^{i-1},S^{i-1};Y_i)+n\epsilon_n\\
&\overset{(b)}=\sum_{i=1}^n I(M,K,Y^{i-1},S^{i-1}, X^{i-1};Y_i)+n\epsilon_n\\
&\overset{(c)}=\sum_{i=1}^n I(M,K,S^{i-1}, X^{i-1};Y_i)+n\epsilon_n\\
&\overset{(d)}=\sum_{i=1}^n I(V_i;Y_i) +n\epsilon_n  \label{eqn:GP_conv_causal}
\end{align}
for $\epsilon_n\rightarrow 0$ and $V_i:=(M,K,S^{i-1})$. Here, $(a)$ follows by applying Fano's inequality from the reliability constraint; $(b)$ and $(d)$ because $X^{i-1}$ is a function of $(M, K, S^{i-1})$; and $(c)$ since $Y^{i-1}-(M,K,S^{i-1}, X^{i-1})-Y_i$ forms a Markov chain.
%%%%%%%%%%%%%%%%%%%%%%%%%%%%%%%%%%%%%%%%%%%%%%%%%%%%%%%%%%%%%%%%%%%%%%
%%%%%%%%%%%%%%%%%%%%%%%%%%%%%%%%%%%%%%%%%%%%%%%%%%%%%%%%%%%%%%%%%%%%%%

Now we utilize the definition and the property of $C_{\mathrm{c}}(A,B)$ to further bound the right-hand side of~\eqref{eqn:GP_conv_causal}: 
\begin{align}
nR&\leq \sum_{i=1}^n I(V_i;Y_i) +n\epsilon_n \\
&\overset{(a)}\leq \sum_{i=1}^n C_{\mathrm{c}}(D(\widehat{P}_{Z_i}\|Q_0), \E[b(X_i)]) + n\epsilon_n\\
&\overset{(b)}\leq nC_{\mathrm{c}}\left(\frac{1}{n}\sum_{i=1}^nD(\widehat{P}_{Z_i}\|Q_0),\frac{1}{n}\sum_{i=1}^n\E[b(X_i)]\right)+n\epsilon_n  \label{eqn:13c}
%&\overset{(b)}\leq nC\left(\frac{1}{n}D(\widehat{P}_{Z^n}\|Q_0^{\times n}),B\right)+n\epsilon_n,
\end{align}
where $(a)$ is because $X_i$ is a function of $V_i$ and $S_i$ and the Markov chain $V_i-(X_i, S_i)-(Y_i, Z_i)$ holds and $(b)$ is due to the concavity of $C_{\mathrm{c}}(A,B)$. Recall from Lemma~\ref{cl:Cl} that $C_{\mathrm{c}}(A,B)$ is non-decreasing in each of $A$ and $B$. According to the input cost constraint, there exists $\delta_n\rightarrow 0$ such that  $\frac{1}{n}\sum_{i=1}^n\E[b(X_i)]\leq B+\delta_n$. On the other hand, from the covertness constraint, there exists $\delta_n'\rightarrow 0$ such that  $ D(\widehat{P}_{Z^n}\|Q_0^{\times n})$ $\leq \delta_n'$, while  as in \cite{wangwornellzheng16} we have
\begin{align}
&D(\widehat{P}_{Z^n}\|Q_0^{\times n})\nonumber \\*
&=-H(Z^n)+\E_{\widehat{P}_{Z^n}}\left[\log \frac{1}{Q_0^{\times n}(Z^n)}\right]\label{eqn:ccf}\\
&=-\sum_{i=1}^nH(Z_i|Z^{i-1})+\E_{\widehat{P}_{Z^n}}\left[\log \frac{1}{Q_0(Z_i)}\right]\\
&=-\sum_{i=1}^nH(Z_i|Z^{i-1})+\E_{\widehat{P}_{Z_i}}\left[\log \frac{1}{Q_0(Z_i)}\right]\\
&\geq -\sum_{i=1}^nH(Z_i)+\E_{\widehat{P}_{Z_i}}\left[\log \frac{1}{Q_0(Z_i)}\right]\\
&=\sum_{i=1}^nD(\widehat{P}_{Z_i}\|Q_0).\label{eqn:ccl}
\end{align}
Hence, \eqref{eqn:13c} implies
\begin{align}
R\leq C_{\mathrm{c}}\left(\frac{\delta_n'}{n},B+\delta_n\right)+\epsilon_n. \label{eqn:conv_no_causal}
\end{align}
Note that the right-hand side of \eqref{eqn:conv_no_causal} approaches $C_{\mathrm{c}}(0,B)$ as $n$ tends to infinity due to the continuity of $C_{\mathrm{c}}(A,B)$, from which follows the condition $P_Z=Q_0$. Finally, the cardinality bound on $\calU$ follows by applying the support lemma \cite{ElGamalKim:11}. %This completes the proof.
\end{proof}

\begin{proof}[Proof of Theorem \ref{thm:converse}] For $R_K\geq 0$ and $B\geq 0$, consider any sequence of $(2^{nR},2^{nR_K}, n)$ codes that simultaneously satisfies the input cost constraint $\limsup_{n\rightarrow \infty}\E_{M, K, S^n}\left[b(X^n)\right]\leq B$,  reliability constraint  $\lim_{n\rightarrow \infty}P_e^{(n)}=0$, and covertness constraint $\lim_{n\rightarrow \infty } D(\widehat{P}_{Z^n}\|Q_0^{\times n})=0$. %Note that the input cost constraint implies that  $\frac{1}{n}\sum_{i=1}^n\E[b(X_i)]\leq B$.

We start with the proof steps used for channels with noncausal CSI \cite{GelfandPinsker:80} without a covertness constraint: 
\begin{align}
nR&\overset{(a)}\leq I(M;Y^n|K)+n\epsilon_n\\
&=\sum_{i=1}^n I(M;Y_i|K,Y^{i-1})+n\epsilon_n\\
&\leq \sum_{i=1}^n I(M, K, Y^{i-1};Y_i)+n\epsilon_n\\
&=\sum_{i=1}^n I(M,K, Y^{i-1},S_{i+1}^n;Y_i)\nonumber \\*
&\qquad -\sum_{i=1}^n I(Y_i;S_{i+1}^n |M,K, Y^{i-1})+n\epsilon_n\\
&\overset{(b)}=\sum_{i=1}^n I(M,K,Y^{i-1},S_{i+1}^n;Y_i)\nonumber \\*
&\qquad -\sum_{i=1}^n I(Y^{i-1};S_i|M, K, S_{i+1}^n) + n\epsilon_n\\
&\overset{(c)}=\sum_{i=1}^n I(M,K, Y^{i-1},S_{i+1}^n;Y_i)\nonumber \\*
&\qquad -\sum_{i=1}^n I(M,K,Y^{i-1},S_{i+1}^n;S_i) + n\epsilon_n\\
&=\sum_{i=1}^n(I(U_i;Y_i)-I(U_i;S_i))+n\epsilon_n  \label{eqn:GP_conv}
\end{align}
for $\epsilon_n\rightarrow 0$ and $U_i:=(M,K,Y^{i-1}, S_{i+1}^n)$. Here, $(a)$ follows by applying Fano's inequality from the reliability constraint; $(b)$ by Csisz\'ar's sum identity; and $(c)$ because $S_i$ and $(M,K, S_{i+1}^n)$ are independent. 
%%%%%%%%%%%%%%%%%%%%%%%%%%%%%%%%%%%%%%%%%%%%%%%%%%%%%%%%%%%%%%%%%%%%%%
%%%%%%%%%%%%%%%%%%%%%%%%%%%%%%%%%%%%%%%%%%%%%%%%%%%%%%%%%%%%%%%%%%%%%%

Now we utilize the definition and the property of $C_{\mathrm{nc}}(A,B)$ to further bound the right-hand side of \eqref{eqn:GP_conv}: 
\begin{align}
nR&\leq \sum_{i=1}^n(I(U_i;Y_i)-I(U_i;S_i))+n\epsilon_n \\
&\leq \sum_{i=1}^n C_{\mathrm{nc}}(D(\widehat{P}_{Z_i}\|Q_0), \E[b(X_i)]) + n\epsilon_n\\
&\overset{(a)}\leq nC_{\mathrm{nc}}\left(\frac{1}{n}\sum_{i=1}^nD(\widehat{P}_{Z_i}\|Q_0),\frac{1}{n}\sum_{i=1}^n\E[b(X_i)]\right)+n\epsilon_n  \label{eqn:13}
%&\overset{(b)}\leq nC\left(\frac{1}{n}D(\widehat{P}_{Z^n}\|Q_0^{\times n}),B\right)+n\epsilon_n,
\end{align}
where $(a)$ is due to the concavity of $C_{\mathrm{nc}}(A,B)$. Recall from Lemma~\ref{cl:Cl} that $C_{\mathrm{nc}}(A,B)$ is non-decreasing in each of $A$ and $B$. According to the input cost constraint, there exists $\delta_n\rightarrow 0$ such that  $\frac{1}{n}\sum_{i=1}^n\E[b(X_i)]\leq B+\delta_n$. On the other hand, due to the chain of inequalities \eqref{eqn:ccf}-\eqref{eqn:ccl} from the covertness constraint, there exists $\delta_n'\rightarrow 0$ such that  $\sum_{i=1}^nD(\widehat{P}_{Z_i}\|Q_0)\leq \delta_n'$. Hence, \eqref{eqn:13} implies
\begin{align}
R\leq C_{\mathrm{nc}}\left(\frac{\delta_n'}{n},B+\delta_n\right)+\epsilon_n. \label{eqn:conv_no_n}
\end{align}
Note that the right-hand side of \eqref{eqn:conv_no_n} approaches $C(0,B)$ as $n$ tends to infinity due to the continuity of $C_{\mathrm{nc}}(A,B)$, from which follows the condition $P_Z=Q_0$. Further, because $I(U;Y)-I(U;S)$ is convex in the conditional distribution $P_{X|U,S}$, it suffices to maximize it over functions $x(u,s)$ instead of $P_{X|S,U}$. Finally, the cardinality bound on $\calU$ follows by applying the support lemma \cite{ElGamalKim:11}. %This completes the proof.
\end{proof}

\begin{proof}[Proof of Lemma \ref{cl:Cl}]
Let us show that  $C_{\mathrm{nc}}(A,B)$ is non-decreasing in each of $A$ and $B$, and concave and continuous in $(A,B)$. It can be proved in a similar manner that the same statement holds for  $C_{\mathrm{c}}(A,B)$.

First,  $C_{\mathrm{nc}}(A,B)$ is non-decreasing in each of $A$ and $B$ since increasing $A$ or $B$ can only enlarge the set of feasible $P_{U|S}P_{X|U,S}$. 

Second, to show the concavity, fix arbitrary $(A_1, B_1)$ and $(A_2, B_2)$ and let $P_{U_1|S}P_{X_1|U_1,S}$ and $P_{U_2|S}$ $P_{X_2|U_2,S}$ denote the corresponding conditional PMFs that achieve the maxima of $C_{\mathrm{nc}}(A_1,B_1)$ and $C_{\mathrm{nc}}(A_2,B_2)$, respectively. Let $Y_1$ and $Z_1$ (resp. $Y_2$ and $Z_2$) denote the channel outputs at the receiver and the warden,  respectively, corresponding to $P_{U_1|S}P_{X_1|U_1,S}$ (resp. $P_{U_2|S}P_{X_2|U_2,S}$). Let $Q$ denote a random variable independent of $U_1, U_2$, and $S$, which takes value $1$ with probability $\lambda$ and value $2$ with probability $1-\lambda$. 

Define $U'=(Q, U_Q)$. Let $X'$, $Y'$, and $Z'$ denote the channel input at the transmitter and the channel outputs at the receiver and the warden, respectively, corresponding to  $P_{X'|U',S}=P_{X_Q|U_Q,S}$. Note that $P_{X'}=\lambda P_{X_1} + (1-\lambda) P_{X_2}$ and $P_{Z'}=\lambda P_{Z_1} + (1-\lambda) P_{Z_2}$.  Then, 
\begin{align}
\E [b(X')]&=\sum_{x\in \calX}P_{X'}(x)b(x)\nonumber \\*
&=\lambda \sum_{x\in \calX}P_{X_1}(x)b(x)+(1-\lambda) \sum_{x\in \calX}P_{X_2}(x)b(x)\nonumber \\*
&\leq \lambda B_1 +(1-\lambda) B_2
\end{align}
and
\begin{align}
D(P_{Z'}\|Q_0)&\leq \lambda D(P_{Z_1}\|Q_0) + (1-\lambda) D(P_{Z_2}\|Q_0)\nonumber \\*
&\leq \lambda A_1 +(1- \lambda) A_2
\end{align}
because the relative entropy is convex in the first argument. 
Hence, it follows that 
\begin{align}
C_{\mathrm{nc}}(\lambda (A_1,B_1) + (1-\lambda) (A_2,B_2))&\geq I(U';Y')-I(U';S)  \\
&\geq I(U_Q, Q;Y')-I(U_Q,Q;S)\\
&\overset{(a)}\geq I(U_Q;Y'|Q)-I(U_Q;S|Q)\\
&=\lambda I(U_1;Y_1) + (1-\lambda) I(U_2;Y_2) \nonumber \\* 
&\qquad - \lambda I(U_1;S) -(1-\lambda)I(U_2;S)\\
&=\lambda C_{\mathrm{nc}}(A_1,B_1) + (1-\lambda) C_{\mathrm{nc}}(A_2,B_2),
\end{align} 
where $(a)$ is because $Q$ and $S$ are independent. Hence, we conclude that $C_{\mathrm{nc}}(A,B)$ is concave.

Lastly, $C_{\mathrm{nc}}(A,B)$ is continuous in $(A,B)$ since both the objective function $I(U;Y)-I(U;S)$ and the constraint functions $\E[b(X)]$ and $D(P_Z\|Q_0)$ are continuous in $P_{U|S}P_{X|U,S}$ as long as  \eqref{eq:suppQ0} is satisfied.
\end{proof}

\section{Proof of Lower Bounds} \label{sec:achie}
In this section, we prove the achievability parts of our main results, i.e., Theorems \ref{thm:achie_causal} and \ref{thm:achie}. To prove the achievability part for the case with causal CSI at the transmitter, we employ the Shannon's strategy \cite{Shannon:58} and use the soft covering theorem  \cite[Theorem 4]{256486},\cite[Corollary VII.4]{6584816} for the covertness analysis. For the case with noncausal CSI at the transmitter, our scheme is based on  multicoding \cite{GelfandPinsker:80, ElGamalKim:11}, but instead of performing the joint-typicality check to find a codeword that \emph{seemingly} follows a joint distribution with the state sequence, we use likelihood encoding employed in \cite{7707384} since it admits easier covertness analysis.

\begin{proof}[Proof of Theorem \ref{thm:achie_causal}] 
Fix $\epsilon>0$. Further fix $P_{V}$ and  $x(v,s)$ such that $P_Z=Q_0$ and  $\E[b(X)]\leq \frac{B}{1+\epsilon}$.
\subsubsection{Codebook generation}    For each $k\in [1:2^{nR_K}]$ and $m\in [1:2^{nR}]$, randomly and independently generate a $v^n(k,m)$ according to $\prod_{i=1}^nP_V(v_i)$. These constitute the codebook $\calC$. 

\subsubsection{Encoding at the transmitter} Given state sequence $s^n$, secret key $k$, and message $m$, the encoder transmits $x^n$ where $x_i=x(v_i(k,m),s_i)$. 

\subsubsection{Decoding at the receiver} Upon receiving $y^n$,  with access to the secret key $k$, the decoder declares that  $\hat{m}$ is sent if it is the unique message such that 
\begin{align}
(v^n(k, \hat{m}), y^n)\in \calT^{(n)}_{\epsilon}. 
\end{align}
Otherwise it declares an error. Here $\calT^{(n)}_\epsilon$ denotes the (strongly) typical set \cite{CsiszarKorner:11}.

\subsubsection{Covertness analysis}
By the soft covering theorem \cite[Theorem 4]{256486},\cite[Corollary VII.4]{6584816}, we have  $\E_{\calC} [D(\widehat{P}_{Z^n}\|Q_0^{\times n})] \overset{n\rightarrow \infty}\longrightarrow 0$ if 
\begin{align}
R+R_K&>I(V;Z).\label{eqn:scover}
\end{align}

\subsubsection{Reliability and input cost analysis} 
By the standard error analysis, it can be shown that the probability of error averaged over the random codebook $\calC$ tends to zero as $n$ tends to infinity if
\begin{align}
R<I(V;Y). \label{eqn:pak}
\end{align}
Furthermore, by the typical average lemma \cite{ElGamalKim:11},  
\begin{align}
\E_{\mathcal{C}, M, K, S^n}\left[b(X^n)\right]\nonumber &=P(X^n\notin T_{\epsilon}^{(n)}) \E_{\mathcal{C}, M, K, S^n}\left[b(X^n)|X^n\notin T_{\epsilon}^{(n)}\right]\nonumber \\*
&\qquad +P(X^n\in T_{\epsilon}^{(n)})\E_{\mathcal{C}, M, K, S^n}\left[ b(X^n)|X^n\in T_{\epsilon}^{(n)}\right]\\
&\leq P(X^n\notin T_{\epsilon}^{(n)}) B_{\max} + B, 
\end{align}
where $B_{\max}:=\max_{x\in \calX}b(x)$. Note that $P(X^n\notin T_{\epsilon}^{(n)})\rightarrow 0$ as $n$ tends to infinity. Hence, we have
\begin{align}
\limsup_{n\rightarrow \infty}\E_{\mathcal{C}, M, K, S^n}\left[b(X^n)\right]\leq B.
\end{align}

In summary,  if \eqref{eqn:scover} and \eqref{eqn:pak}  are satisfied, then there must exist a sequence of codes such that $\lim_{n\rightarrow \infty}P_e^{(n)}=0$, $\lim_{n\rightarrow \infty}\E_{M, K, S^n}\left[b(X^n)\right]\leq B$,   and $\lim_{n\rightarrow \infty } D(P_{Z^n}\|Q_0^{\times n})=0$. By applying the Fourier-Mozkin elimination \cite{ElGamalKim:11} to  \eqref{eqn:scover} and \eqref{eqn:pak}, we complete the proof. \end{proof}
%%%%%%%%%%%%%%%%%%%%%%%%%%%%%%%%%%%%%%%%%%%%%%%%%%%%%%%%%%%%%%%%%%%%%%%%%%%%%%%

\begin{proof}[Proof of Theorem \ref{thm:achie}] 
Fix $\epsilon>\epsilon'>0$.
Further fix $P_{U|S}$ and  $x(u,s)$ such that $P_Z=Q_0$ and  $\E[b(X)]\leq \frac{B}{1+\epsilon'}$.

\setcounter{subsubsection}{0} 
\subsubsection{Codebook generation}    For each $k\in [1:2^{nR_K}]$ and $m\in [1:2^{nR}]$, randomly and independently generate $2^{nR'}$ codewords $u^n(k,m,l)$,  $l\in [1:2^{nR'}]$ according to $\prod_{i=1}^nP_U(u_i)$. These constitute the codebook $\calC$. \label{subsubsec:codebook}

\subsubsection{Encoding at the transmitter} Given state sequence $s^n$, secret key $k$, and message $m$, evaluate the likelihood 
\begin{align}
g(l|s^n,k,m)=\frac{P^{\times n}_{S|U}(s^n|u^n(k,m,l))}{\sum_{l'\in [1:2^{nR'}]}P^{\times n}_{S|U}(s^n|u^n(k,m,l'))}. \label{eqn:likeli}
\end{align}
The encoder randomly generates $l$ according to \eqref{eqn:likeli} and  transmits $x^n$ where $x_i=x(u_i(k,m,l),s_i)$. 
\label{subsubsec:encoding}

\subsubsection{Decoding at the receiver} Upon receiving $y^n$,  with access to the secret key $k$, the decoder declares that  $\hat{m}$ is sent if it is the unique message such that 
\begin{align}
(u^n(k, \hat{m}, l), y^n)\in \calT^{(n)}_{\epsilon} \label{eqn:typical}
\end{align}
for some $l\in [1:2^{nR'}]$;  if no such unique $\hat{m}$ can be found, it declares an error. %Here $\calT^{(n)}_\epsilon$ denotes the (strongly) typical set \cite{CsiszarKorner:11}.

\subsubsection{Covertness analysis}
For covertness analysis, we use the following lemma, which is proven  at the end of this section. %in Appendix \ref{appendix:zz}. 
\begin{lemma} \label{lemma:zz}
For the codebook generation and encoding procedure described above, if $R'>I(U;S)$ and $R+R_K+R'>I(U;Z)$, then 
\begin{align}
\E_{\calC} \left[ D(\widehat{P}_{Z^n} \| P^{\times n}_{Z})\right]\overset{n\rightarrow \infty}\longrightarrow 0. \label{eqn:zz}
\end{align}
\end{lemma}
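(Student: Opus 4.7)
My plan is to compare the induced distribution $\widehat{P}_{Z^n}$ to the target product distribution $P_{Z}^{\times n}$ by routing through a convenient idealized distribution and then invoking two soft-covering-style arguments. Define the idealized joint distribution
\begin{align*}
\Gamma(k,m,l,s^n,z^n) = 2^{-n(R+R_K+R')} P_{S|U}^{\times n}\!\bigl(s^n\mid u^n(k,m,l)\bigr)\, P_{Z|S,U}^{\times n}\!\bigl(z^n\mid s^n, u^n(k,m,l)\bigr),
\end{align*}
so that under $\Gamma$ the triple $(K,M,L)$ is uniform on $[1:2^{n(R+R_K+R')}]$ and the sequence $(S^n,Z^n)$ is drawn from the memoryless DMC with input $u^n(K,M,L)$. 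A direct computation using \eqref{eqn:likeli} shows that the conditional distribution $\widehat{P}_{L\mid K,M,S^n}$ is exactly $g(\cdot\mid S^n,K,M)$, which coincides with $\Gamma_{L\mid K,M,S^n}$; thus $\widehat P$ and $\Gamma$ differ only in the marginal of $(K,M,S^n)$.

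First I would handle the KL leg: under $\Gamma$ the marginal $\Gamma_{Z^n}(z^n)=2^{-n(R+R_K+R')}\sum_{k,m,l}P_{Z|U}^{\times n}(z^n\mid u^n(k,m,l))$ is exactly the output distribution of the memoryless channel $P_{Z|U}$ driven by a uniform pick from the $2^{n(R+R_K+R')}$ i.i.d.\ codewords $u^n(k,m,l)$. The soft-covering theorem of \cite{256486,6584816} in KL form yields
\begin{align*}
\E_{\calC}\bigl[D(\Gamma_{Z^n}\Vert P_Z^{\times n})\bigr]\le 2^{-\gamma_1 n}
\end{align*}
for some $\gamma_1>0$, provided $R+R_K+R'>I(U;Z)$.

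Second I would handle the TV leg. Because the conditional $\widehat P_{L\mid K,M,S^n}=\Gamma_{L\mid K,M,S^n}$ and $Z^n$ is generated by the same memoryless channel under both distributions, the data-processing inequality for total variation gives
\begin{align*}
\bigl\|\widehat P_{Z^n}-\Gamma_{Z^n}\bigr\|_{TV}\le \bigl\|\widehat P_{K,M,S^n}-\Gamma_{K,M,S^n}\bigr\|_{TV}=\E_{K,M}\!\bigl[\bigl\|P_S^{\times n}-\Gamma_{S^n\mid K,M}\bigr\|_{TV}\bigr].
\end{align*}
For each $(k,m)$, $\Gamma_{S^n\mid k,m}$ is the output of the channel $P_{S|U}$ driven by a uniform choice among the $2^{nR'}$ codewords $u^n(k,m,\cdot)$. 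The soft-covering theorem (in its TV form, via Pinsker applied to the KL version) then gives an exponential bound $\E_{\calC}[\|\widehat P_{Z^n}-\Gamma_{Z^n}\|_{TV}]\le 2^{-\gamma_2 n}$ for some $\gamma_2>0$, whenever $R'>I(U;S)$.

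Finally I would combine the two legs. Writing $D(\widehat{P}_{Z^n}\Vert P_Z^{\times n})=-H(\widehat{P}_{Z^n})+\sum_{z^n}\widehat{P}_{Z^n}(z^n)\log\frac{1}{P_Z^{\times n}(z^n)}$ and using the analogous expansion for $D(\Gamma_{Z^n}\Vert P_Z^{\times n})$, the difference is controlled by $|H(\widehat{P}_{Z^n})-H(\Gamma_{Z^n})|$ plus $\sum_{z^n}(\widehat{P}_{Z^n}-\Gamma_{Z^n})(z^n)\log\frac{1}{P_Z^{\times n}(z^n)}$. Here the hypothesis $\mathrm{supp}(Q_0)=\calZ$ from \eqref{eq:suppQ0} together with $P_Z=Q_0$ ensures $-\log P_Z(z)\le \log(1/q_{\min})$ with $q_{\min}>0$, so the second term is at most $2\|\widehat{P}_{Z^n}-\Gamma_{Z^n}\|_{TV}\cdot n\log(1/q_{\min})$, and by Csisz\'ar--K\"orner continuity of entropy $|H(\widehat{P}_{Z^n})-H(\Gamma_{Z^n})|\le \|\widehat{P}_{Z^n}-\Gamma_{Z^n}\|_{TV}\bigl(n\log|\calZ|+\log\tfrac{1}{\|\widehat{P}_{Z^n}-\Gamma_{Z^n}\|_{TV}}\bigr)$. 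Because the TV bound $2^{-\gamma_2 n}$ decays exponentially, both correction terms vanish, and combined with the KL leg this yields $\E_{\calC}[D(\widehat{P}_{Z^n}\Vert P_Z^{\times n})]\to 0$, which together with the triangle inequality $D(\widehat P_{Z^n}\|P_Z^{\times n})\ge 0$ and the hypothesis $P_Z=Q_0$ delivers \eqref{eqn:zz}.

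The main obstacle I anticipate is the last step: converting a TV bound between $\widehat P_{Z^n}$ and $\Gamma_{Z^n}$ into a KL bound against $P_Z^{\times n}$ without the KL exploding in $n$. This works only because $-\log P_Z$ is bounded (by the assumption $\mathrm{supp}(Q_0)=\calZ$) and the soft-covering TV bound decays exponentially in $n$; absent either of these the continuity argument would fail.
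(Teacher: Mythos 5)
Your proposal is correct and follows essentially the same route as the paper's proof: the same ideal distribution $\Gamma$, the same observation that the likelihood encoder reproduces $\Gamma_{L\mid K,M,S^n}$, and the same two soft-covering invocations under $R'>I(U;S)$ and $R+R_K+R'>I(U;Z)$. The only difference is bookkeeping --- you keep one leg in KL and convert the other from TV to KL explicitly via the boundedness of $-\log P_Z$ and entropy continuity, whereas the paper reduces the entire problem to TV up front and delegates that same conversion to the cited reference \cite{7707384}.
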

Now, let 
\begin{align}
R'&>I(U;S)\label{eqn:cover1}\\
R+R_K+R'&>I(U;Z).\label{eqn:cover2}
\end{align}
Because $P_{U|S}$ and $x(u,s)$ are chosen to  satisfy $P_Z=Q_0$, Lemma \ref{lemma:zz} implies that  
\begin{align}
\E_{\calC} [D(\widehat{P}_{Z^n}\|Q_0^{\times n})] \overset{n\rightarrow \infty}\longrightarrow 0.
\end{align}

\subsubsection{Reliability analysis} \label{subsubsec:relia}
Consider the probability of error averaged over the randomly generated codebook~$\mathcal{C}$. Let $M$ and $\hat{M}$ denote the transmitted and decoded messages, respectively, and let $L$ denote the index generated according to \eqref{eqn:likeli} at the encoder. The error event $\{\hat{M} \neq M\}$ occurs only if at least one of the following events occurs: 
\begin{align}
\calE_1&:=\{ (U^n(K,M,L), S^n)\notin \calT_{\epsilon'}^{(n)}\} \label{eqn:e1}\\
\calE_2&:=\{ (U^n(K,M,L), Y^n)\notin \calT_{\epsilon}^{(n)}\} \label{eqn:e2} \\
\calE_3&:=\{ (U^n(K,m,l), Y^n)\in \calT_{\epsilon}^{(n)}  \mbox { for some } m\neq M \mbox{ and } l\in [1:2^{nR'}]\}. \label{eqn:e3}
\end{align}
Hence, the probability of error is bounded as 
\begin{align}
P(\hat{M}\neq M)\leq P(\calE_1)+P(\calE_1^c\cap \calE_2) + P(\calE_3). \label{eqn:prob_ub}
\end{align} 
Now we bound each term on the right-hand side of \eqref{eqn:prob_ub}. The first term $P(\calE_1)$ tends to zero as $n$ tends to infinity due to \cite[Lemma 2]{7707384}, as long as \eqref{eqn:cover1} is satisfied.
Next, note that 
\begin{align}
\calE_1^c&=\{(U^n(K,M,L),S^n)\in  \calT_{\epsilon'}^{(n)}\}.
\end{align} 
By the conditional typicality lemma \cite{ElGamalKim:11}, $P(\calE_1^c\cap \calE_2)$ tends to zero as $n$ tends to infinity. Lastly, $P(\calE_3)$ tends to zero as $n$ tends to infinity by the packing lemma \cite{ElGamalKim:11} provided
\begin{align}
R+R'&<I(U;Y). \label{eqn:4rel}
\end{align}
In summary, the probability of error averaged over the random codebook $\calC$ tends to zero as $n$ tends to infinity if \eqref{eqn:cover1}, \eqref{eqn:cover2}, and \eqref{eqn:4rel} are satisfied.

\subsubsection{Input cost analysis} In the reliability analysis, it is shown that 
\begin{align}
P(\calE_1)&=P\{(U^n(K,M,L),S^n)\notin  \calT_{\epsilon'}^{(n)}\}\\
&=P((U^n(K,M,L),X^n, S^n)\notin  \calT_{\epsilon'}^{(n)}) \overset{n\rightarrow \infty}{\longrightarrow} 0. \label{eqn:typ_cost}
\end{align}
Note that if $x^n \in \calT_{\epsilon'}^{(n)}$, then $b(x^n)\leq B$ by the typical average lemma \cite{ElGamalKim:11}. Hence, 
\begin{align}
&\E_{\mathcal{C}, M, K, S^n}\left[b(X^n)\right]\nonumber \\*
&=P(\calE_1) \E_{\mathcal{C}, M, K, S^n}\left[b(X^n)|\calE_1\right]\nonumber \\*
&\qquad +P(\calE_1^c)\E_{\mathcal{C}, M, K, S^n}\left[ b(X^n)|\calE_1^c\right]\\
&\leq P(\calE_1) B_{\max} + P(\calE_1^c) B, \label{eqn:pow}
\end{align}
where $B_{\max}:=\max_{x\in \calX}b(x)$. By \eqref{eqn:typ_cost}, the right-hand side of \eqref{eqn:pow} approaches $B$ as $n$ tends to infinity. Hence, we have
\begin{align}
\limsup_{n\rightarrow \infty}\E_{\mathcal{C}, M, K, S^n}\left[b(X^n)\right]\leq B.
\end{align}

In summary,  if \eqref{eqn:cover1}, \eqref{eqn:cover2}, and \eqref{eqn:4rel} are satisfied, then there must exist a sequence of codes such that $\lim_{n\rightarrow \infty}P_e^{(n)}=0$, $\lim_{n\rightarrow \infty}\E_{M, K, S^n}\left[b(X^n)\right]\leq B$,   and $\lim_{n\rightarrow \infty } D(P_{Z^n}\|Q_0^{\times n})=0$. By applying the Fourier-Mozkin elimination \cite{ElGamalKim:11} to \eqref{eqn:cover1}, \eqref{eqn:cover2}, and \eqref{eqn:4rel}, we complete the proof. 
\end{proof}

\begin{remark}
We note that our scheme for the case with noncausal CSI at the transmitter is similar to that in \cite{GoldfeldCuffPermuter:arxiv16} for wiretap channels with noncausal CSI at the transmitter under the semantic-security metric requiring negligible information leakage for all message distributions. The coding scheme in  \cite{GoldfeldCuffPermuter:arxiv16} incorporates superposition coding; the inner codebook is for a random index and the outer codebook is for the message and another random index. To compare  that scheme with ours, let us consider the special case of the scheme  in \cite{GoldfeldCuffPermuter:arxiv16} where the rate of the inner codebook is set to zero and let $R$ and $R'$ denote the rates of the message and the random index in the outer codebook. For our scheme, let us consider the special case of  $R_K=0$ and sufficiently large $B$, i.e., no secret key and no input cost constraint.  Then the codebook generation and the encoding procedure of the scheme \cite{GoldfeldCuffPermuter:arxiv16} become the same as our scheme. As shown in the proof of Theorem~\ref{thm:achie}, reliability at the  receiver is satisfied if $R'>I(U;S)$ and $R+R'<I(U;Y)$, but covertness requires $R+R'>I(U;Z)$ and $P_Z=Q_0$ while semantic security requires $R'>I(U;Z)$. %This is because semantic security requires negligible information leakage \emph{for any} message distribution and hence the eavesdropper's channel should look IID \emph{for each given} message. 
\end{remark} 

\begin{IEEEproof}[Proof of Lemma~\ref{lemma:zz}]
The proof follows similar lines to \cite[Section VII-A]{7707384}. As in \cite[Section VII-A]{7707384}, it can be checked that, to prove \eqref{eqn:zz}, it suffices to show that the total variation (TV) distance approaches zero:
\begin{align}
\E_{\calC}\|\widehat{P}_{Z^n}-P^{\times n}_{Z}\|_{\mathrm{TV}}\overset{n\rightarrow \infty} \longrightarrow 0. \label{eqn:ZZ_TV}
\end{align}
To evaluate the TV distance, define the ideal PMF for codebook $\calC$ as follows:
\begin{align*}
&\Gamma^{(\calC)}(k,m,l,u^n, s^n, z^n)=\nonumber \\*
&2^{-n(R_K+R+R')} \mathbbm{1}_{u^n(k,m,l)=u^n}P_{S|U}^{\times n} (s^n|u^n) P^{\times n}_{Z|U,S}(z^n|u^n,s^n).
\end{align*}
Using the triangle inequality for the TV distance, we upper-bound the left-hand side of \eqref{eqn:ZZ_TV} as 
\begin{align}
\E_{\calC}\|\widehat{P}_{Z^n}-P^{\times n}_{Z}\|_{\mathrm{TV}}&\leq \E_{\calC}\|\widehat{P}_{Z^n}-\Gamma_{Z^n}^{(\calC)}\|_{\mathrm{TV}}+\E_{\calC}\|\Gamma_{Z^n}^{(\calC)}-P^{\times n}_{Z}\|_{\mathrm{TV}}. \label{eqn:TV_two}
\end{align}
From the soft covering theorem \cite[Theorem 4]{256486},\cite[Corollary VII.4]{6584816}, the second term on the right-hand side of \eqref{eqn:TV_two} decays to zero as $n\rightarrow \infty$ if $R_K+R+R'>I(U;Z)$. 
For the first term on the right-hand side of \eqref{eqn:TV_two}, note that 
\begin{align}
\E_{\calC}\|\widehat{P}_{Z^n}-\Gamma_{Z^n}^{(\calC)}\|_{\mathrm{TV}}\leq \E_{\calC}\|\widehat{P}_{S^n, Z^n}-\Gamma_{S^n,Z^n}^{(\calC)}\|_{\mathrm{TV}}. \label{eqn:TV_spr}
\end{align}
By applying the same analysis as in \cite[Section VII-A]{7707384}, the right-hand side of \eqref{eqn:TV_spr} decays to zero as $n\rightarrow \infty$ if $R'>I(U;S)$. %\endproof
\end{IEEEproof}

\section{Examples} \label{sec:examples}
In this section, we show two examples where the covert capacity of a channel is zero in the absence of  CSI at the transmitter, but is positive with CSI.
\subsection{ The Binary Symmetric Channel}  
\begin{figure}
\center \includegraphics[width=2in,clip]{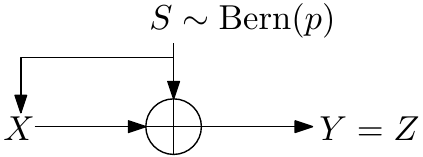}
\caption{Binary symmetric channel with CSI at the transmitter} \label{fig:bsc_ex}
\end{figure}

Consider a channel in Fig. \ref{fig:bsc_ex} where $\calX$, $\calY$, $\calZ$, and $\calS$ are all binary, and where $P_S$ is the Bernoulli distribution of parameter $p\in(0,0.5)$. The channel law is 
\begin{equation}
Y=Z = X\oplus S.
\end{equation}
Assume that $x_0=0$ and $R_K>0$.

Using Theorems~\ref{thm:converse_causal} and~\ref{thm:achie_causal} one can check that,  with causal CSI, the optimal choice is $V=Y=Z$ having the Bernoulli distribution of parameter $p$. This gives
\begin{equation}
C_{\mathrm{c}}= H_\textnormal{b}(p) = p\log\frac{1}{p}+(1-p)\log\frac{1}{1-p}.
\end{equation}
Furthermore, it can be checked that $C_{\mathrm{nc}}=C_{\mathrm{c}}$. 
Note that, without CSI, covert communication cannot have a positive rate \cite{chebakshijaggi13,wangwornellzheng16}   on this channel.

\subsection{  The AWGN Channel}\label{subsec:gaussian}
\begin{figure}
\center \includegraphics[width=3.6in,clip]{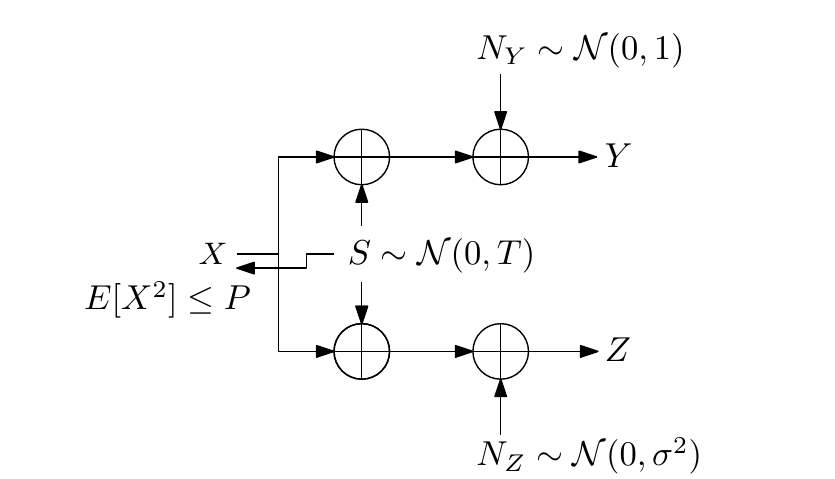}
\caption{AWGN channel with CSI at the transmitter} \label{fig:Gaussian_ex}
\end{figure}
Consider an AWGN channel in Fig. \ref{fig:Gaussian_ex} where the channel outputs at the receiver and the warden are given as 
\begin{align}
Y&=X+S+N_Y \label{eqn:ch_Y} \\
Z&=X+S+N_Z,
\end{align} 
respectively, where $X$ is the channel input from the transmitter, $S\sim \calN(0,T)$ is the external interference that is known to the transmitter causally or noncausally but unknown to the receiver and the warden, and $N_Y\sim \calN(0,1)$ and $N_Z\sim \calN(0,\sigma^2)$, $\sigma^2>0$, are additive Gaussian noises. Let $P$ denote the input power constraint at the transmitter, so the input must satisfy $\E[X^2]\le P$. The ``no input" symbol is $0$, hence the warden observes $Z^n$ distributed according to $Q_0^{\times n}$, where $Q_0=\calN(0,T+\sigma^2)$, when no communication takes place over $n$ channel uses. The transmitter and the receiver are assumed to share a secret key of rate $R_K$. The covertness constraint is again given by $\lim_{n\rightarrow \infty } D(\widehat{P}_{Z^n}\|Q_0^{\times n})=0$. The covert capacity of this channel is defined in the same way as in Section \ref{sec:model} and denoted by $C_{\mathrm{c}}$ and $C_{\mathrm{nc}}$  for causal and noncausal CSI cases, respectively. 

The following theorems show that the covert capacity can be positive for the AWGN channel both with causal CSI and with noncausal CSI at the transmitter. In the following, we define
\begin{subequations}\label{eqn:defs}
\begin{align}
\gamma^* &:= \min\left\{1,\frac{P}{2T}\right\} \label{eqn:gamma}\\ 
T^*& :=(1-\gamma^*)^2T\\
P^* &:= T-T^*.  
\end{align}
\end{subequations}

\begin{theorem} \label{thm:Gaussian_causal}
If %the secret key rate satisfies the following:
\begin{align}
R_K&> \frac{1}{2} \log \left(1+\frac{P^*}{T^*+\sigma^2}\right)-\frac{1}{2} \log \left(1+\frac{P^*}{T^*+1}\right),\label{eqn:key_causal}
\end{align}
the covert capacity with causal CSI at the transmitter  is lower-bounded as 
\begin{align}
C_{\mathrm{c}}\geq \frac{1}{2}\log \left(1+\frac{P^*}{T^*+1} \right). \label{eqn:cap_causal}
\end{align}
\end{theorem}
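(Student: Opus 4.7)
The plan is to instantiate the causal-CSI achievability of Theorem~\ref{thm:achie_causal} for the AWGN setting with a Shannon-type strategy built around a Gaussian auxiliary and a linear encoder. Specifically, I will take $V\sim\mathcal{N}(0,P^*)$ drawn independently of $S$ and set $x(v,s)=v-\gamma^* s$, with $\gamma^*$ as in~\eqref{eqn:gamma}. Under this choice the receiver and the warden see
\begin{align*}
Y &= V + (1-\gamma^*)S + N_Y, \qquad Z = V + (1-\gamma^*)S + N_Z,
\end{align*}
with $V$ independent of $(1-\gamma^*)S+N_Y\sim\mathcal{N}(0,T^*+1)$ and of $(1-\gamma^*)S+N_Z\sim\mathcal{N}(0,T^*+\sigma^2)$. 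The conceptual point is that a fraction $\gamma^*$ of the interference is cancelled via Shannon's strategy, while the residual interference is absorbed into $V$ so that the covertness signature at the warden is matched exactly.

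Next I would verify, in order, the three requirements of Theorem~\ref{thm:achie_causal}. For the power constraint, $\mathbb{E}[X^2]=P^*+(\gamma^*)^2 T = T(2\gamma^*-(\gamma^*)^2)+(\gamma^*)^2 T = 2\gamma^* T\le P$ by the definition of $\gamma^*$. For the single-letter covertness condition $P_Z=Q_0$, the warden's observation $Z$ is zero-mean Gaussian with variance $P^*+T^*+\sigma^2 = T+\sigma^2$, so $P_Z=\mathcal{N}(0,T+\sigma^2)=Q_0$. Finally, the two mutual informations are evaluated directly as $I(V;Y)=\tfrac12\log\!\bigl(1+P^*/(T^*+1)\bigr)$ and $I(V;Z)=\tfrac12\log\!\bigl(1+P^*/(T^*+\sigma^2)\bigr)$, so the key-rate requirement $I(V;Z)-I(V;Y)<R_K$ of Theorem~\ref{thm:achie_causal} matches~\eqref{eqn:key_causal} and the achievable rate is the right-hand side of~\eqref{eqn:cap_causal}.

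The only step that is not just a Gaussian computation is lifting Theorem~\ref{thm:achie_causal} from finite alphabets to the AWGN channel with a second-moment cost. I would handle this in the standard way: discretize the input and state alphabets on a fine, truncated uniform grid; apply the DMC achievability (including the soft-covering covertness bound) to the resulting finite channel with cost $b(x)=x^2$; and then let the quantization step shrink and the truncation expand. All the relevant functionals ($I(V;Y)$, $I(V;Z)$, $\mathbb{E}[X^2]$, and $D(P_Z\|Q_0)$) are continuous on a neighborhood of the jointly Gaussian distribution above, so the Gaussian rate, power, and covertness values are recovered in the limit.

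The main obstacle is really the discretization/limiting argument, not the coding construction; the Gaussian-soft-covering analogue of the lemma underlying the covertness analysis in Theorem~\ref{thm:achie_causal} must be invoked (or reproven) with enough care that the KL divergence to $Q_0^{\times n}$, not merely the total variation, vanishes. Given the matched variance $P^*+T^*+\sigma^2=T+\sigma^2$ and the Gaussianity of $Z$ under the code, this is routine but must be stated; the choice of $\gamma^*$ as in~\eqref{eqn:gamma} then optimizes $P^*/(T^*+1)$ subject to the power budget, yielding the stated bound.
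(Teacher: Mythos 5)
Your choice of auxiliary and encoder is exactly the paper's: $V\sim\calN(0,P^*)$ independent of $S$ and $X=V-\gamma^*S$, and your verifications of the power budget ($\E[X^2]=2\gamma^*T\le P$), the variance matching ($P^*+T^*=T$, hence $P_Z=Q_0$), and the two mutual informations are all correct and reproduce \eqref{eqn:key_causal} and \eqref{eqn:cap_causal}.

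The gap is in your lifting argument. You propose to discretize the \emph{input} (and state) alphabet on a fixed fine grid and then apply the DMC achievability. But once $X$ takes values on a finite grid, the single-letter output law at the warden is a finite mixture of Gaussians, so $P_Z\neq Q_0$ exactly, and the covertness requirement is not a continuity statement about a single-letter functional: for a fixed grid with $D(P_Z\|Q_0)=\delta>0$, the chain \eqref{eqn:ccf}--\eqref{eqn:ccl} shows $D(\widehat{P}_{Z^n}\|Q_0^{\times n})\gtrsim n\delta\to\infty$, so no code sequence built on that grid is covert, and ``letting the quantization step shrink afterwards'' exchanges limits in the wrong order (you would need the grid to refine with $n$, at which point the fixed-DMC asymptotics you are invoking no longer apply). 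Your later appeal to ``the Gaussianity of $Z$ under the code'' is inconsistent with having discretized the input. The paper resolves this by never discretizing the encoder: the codebook is generated from the continuous Gaussian PDF and the soft-covering theorem is applied directly to PDFs, so the ideal output law is exactly $Q_0^{\times n}$ and the KL divergence vanishes (with a separate truncation step, as in the proof of Theorem~\ref{thm:Gaussian}, to handle codewords violating the per-block power constraint); quantization is used only at the \emph{decoder}, where refining the partitions drives $I([V];[Y])$ up to $I(V;Y)$ for the reliability analysis without touching covertness. Restructuring your limiting argument along those lines closes the gap.
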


\begin{theorem} \label{thm:Gaussian}
If %the secret key rate satisfies the following:
\begin{align}
R_K&> \frac{1}{2}\log \left(1+\frac{(P^*+\frac{P^*}{P^*+1}T^*)^2 }{(P^*+(\frac{P^*}{P^*+1})^2T^*)(P^*+T^*+\sigma^2)-(P^*+\frac{P^*}{P^*+1}T^*)^2 }\right)\nonumber \\
&\quad -\frac{1}{2}\log \left(1+\frac{(P^*+\frac{P^*}{P^*+1}T^*)^2 }{(P^*+(\frac{P^*}{P^*+1})^2T^*)(P^*+T^*+1)-(P^*+\frac{P^*}{P^*+1}T^*)^2 }\right), \label{eqn:key}
\end{align}
 the covert capacity  is given by
\begin{align}
C_{\mathrm{nc}}=\frac{1}{2}\log \left(1+P^*\right). \label{eqn:cap}
\end{align}
\end{theorem}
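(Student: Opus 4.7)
\emph{Strategy.} The theorem has two parts: an achievability lower bound $C_{\mathrm{nc}} \ge \frac{1}{2}\log(1+P^*)$ under the key-rate condition \eqref{eqn:key}, and a matching converse upper bound that holds unconditionally. My plan is to specialize the Gaussian versions of Theorems~\ref{thm:achie} and~\ref{thm:converse} to an explicit Costa-style dirty-paper-coding choice, and then close both bounds via direct computation of Gaussian mutual informations. Lifting the DMC single-letter bounds to the AWGN setting under a second-moment constraint is handled by standard discretization together with continuity arguments analogous to those in Lemma~\ref{cl:Cl}.

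\emph{Achievability.} I set $X = X' - \gamma^* S$ with $X' \sim \calN(0, P^*)$ independent of $S$, and take the auxiliary $U = X' + \alpha \tilde{S}$ where $\tilde{S} := (1-\gamma^*) S$ and $\alpha := P^*/(P^*+1)$. Three verifications then suffice. (i)~The power budget holds: $\E[X^2] = \gamma^{*2} T + P^* = 2\gamma^* T \le P$ by definition of $\gamma^*$. (ii)~$Z = \tilde{S} + X' + N_Z$ is zero-mean Gaussian with variance $T^* + P^* + \sigma^2 = T + \sigma^2$, so $P_Z = Q_0$ holds exactly. (iii)~On the effective channel $Y = \tilde{S} + X' + N_Y$, Costa's identity gives $I(U;Y) - I(U;S) = \frac{1}{2}\log(1+P^*)$, and direct substitution of the second moments of $U$, $Y$, and $Z$ into the Gaussian mutual-information formula verifies that $I(U;Z) - I(U;Y)$ equals the right-hand side of \eqref{eqn:key}. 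Theorem~\ref{thm:achie} (in its Gaussian form) then delivers the desired rate whenever $R_K$ exceeds this gap.

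\emph{Converse.} The Gaussian analog of Theorem~\ref{thm:converse} yields $C_{\mathrm{nc}} \le \sup\{I(U;Y) - I(U;S)\}$ over joint distributions satisfying $\E[X^2] \le P$ and, in the limit, $P_Z = Q_0$. The identity $I(U;Y) - I(U;S) = I(U;Y|S) - I(U;S|Y) \le I(U;Y|S) \le I(X;Y|S)$, the last step by the Markov chain $U \to (X,S) \to Y$, reduces the problem to upper-bounding $I(X;Y|S)$ under the same constraints. Since $N_Z$ is independent of $(X,S)$, covertness forces $W := X + S \sim \calN(0, T)$; in particular $\E[X] = 0$ and $\var(X) = -2\cov(X, S)$. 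Setting $\gamma := -\cov(X,S)/T \ge 0$, the power constraint reads $\var(X) = 2\gamma T \le P$, i.e., $\gamma \le P/(2T)$, so $\gamma \in [0, \gamma^*]$. A Gaussian entropy bound plus Jensen's inequality give $I(X;Y|S) \le \frac{1}{2}\log(1 + \E[\var(X|S)])$, while the linear-MMSE estimate $\E[\var(X|S)] \le \var(X) - \cov(X,S)^2/\var(S) = \gamma T(2-\gamma)$ together with the monotonicity of $\gamma \mapsto \gamma(2-\gamma)$ on $[0,1] \supseteq [0,\gamma^*]$ yields $\E[\var(X|S)] \le \gamma^* T(2-\gamma^*) = P^*$. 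Hence $C_{\mathrm{nc}} \le \frac{1}{2}\log(1+P^*)$, matching the achievability.

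\emph{Main obstacle.} The principal technical subtlety is the passage from the $n$-letter covertness constraint $D(\widehat{P}_{Z^n}\|Q_0^{\times n}) \to 0$ to the clean single-letter Gaussian conclusion $W \sim \calN(0, T)$; unlike the DMC converse, where cardinality bounds tame the auxiliary, here one must invoke a continuity/tightness argument showing that $\max\{I(U;Y) - I(U;S) \colon D(P_Z\|Q_0) \le \delta,\ \E[X^2] \le P + \delta\}$ is continuous at $\delta = 0$, adapting Lemma~\ref{cl:Cl} to the second-moment-constrained Gaussian setting. The remaining Costa calculation and one-dimensional maximization over $\gamma$ are essentially algebraic.
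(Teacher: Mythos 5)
Your proposal is correct and follows essentially the same route as the paper: the identical dirty-paper choice $U=X^*+\frac{P^*}{P^*+1}(1-\gamma^*)S$, $X=X^*-\gamma^*S$ for achievability, and the same converse chain $I(U;Y)-I(U;S)\le I(X;Y|S)$ combined with the covertness constraint forcing $\cov(X,S)=-\tfrac{1}{2}\var(X)$, followed by the one-dimensional maximization (your $\gamma$-parametrization is equivalent to the paper's maximization over $\tilde P=\var(X)$, and your flagged ``main obstacle'' matches the paper's quantization and power-truncation steps). The only nitpick is that when $\gamma^*=1$ the correlation bound allows $\gamma$ up to $2$ rather than $\gamma^*$, but since $\gamma(2-\gamma)$ peaks at $\gamma=1$ the conclusion $\E[\var(X|S)]\le P^*$ is unaffected.
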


\begin{remark}
If the warden's channel is degraded, i.e., $\sigma^2> 1 $, a secret key  is not needed to achieve the rates \eqref{eqn:cap_causal} and  \eqref{eqn:cap} for the cases with causal CSI and with noncausal CSI at the transmitter, respectively. 
\end{remark}

\begin{remark}
Let us assume that $R_K$ is sufficiently large so that \eqref{eqn:key_causal} and \eqref{eqn:key} are satisfied. If $T^*=0$, i.e., $T\leq \frac{P}{2}$, then $C_{\mathrm{c}} = C_{\mathrm{nc}}$.
On the other hand, if  $T\to\infty$, it follows   $P^*\to P$. Then, $C_{\mathrm{nc}}$ approaches $\frac{1}{2}\log (1+P)$, which is the capacity of the channel \eqref{eqn:ch_Y} with noncausal CSI at the transmitter and without a covertness constraint. 

\end{remark}

%%%%%%%%%%%%%%%%%%%%%%%%%%%

We prove Theorems~\ref{thm:Gaussian_causal} and~\ref{thm:Gaussian} by adapting our DMC results in Theorems~\ref{thm:achie_causal}, \ref{thm:converse}, and~\ref{thm:achie}. In the achievability proofs of Theorems \ref{thm:Gaussian_causal} and \ref{thm:Gaussian},  we reduce the interference power to make room for message transmission. We set the channel input to have the form of $X=X^*-\gamma^* S$ where $X^*$ is independent   of $S$,  so that $\gamma^* S$ is subtracted from $S$ when $X$ is sent. Then, we regard $X^*$ as the input for the channel with reduced interference  power of $(1-\gamma^*)^2T$, i.e., $T^*$. To satisfy the covertness constraint, $X^*$ must have  power $T-T^*=P^*$. Note that the choice of $\gamma^*$ in \eqref{eqn:gamma} ensures that the power constraint of $X$ is satisfied, i.e., 
\begin{align}
\E[X^2]=\E[X^{*2}]+\gamma^{*2}T = T- (1-\gamma^*)^2T+\gamma^{*2}T=2\gamma^*T\leq P.
\end{align}
For the case with causal CSI, the right-hand side of \eqref{eqn:cap_causal} is achieved by letting $V=X^*$ and treating interference as noise at the receiver.  For the case with noncausal CSI, the  right-hand side of \eqref{eqn:cap} is achieved by choosing $U$ as in   ``dirty paper coding''~\cite{1056659}. 

In the following we first prove Theorem~\ref{thm:Gaussian}.

\begin{proof}[Achievability proof of Theorem \ref{thm:Gaussian}] 
%We can adapt Theorem~\ref{thm:achie} for the Gaussian case with a power constraint through discretization argument described in Remark \ref{remark:gaussian}.  
We modify the proof of Theorem~\ref{thm:achie} so that it applies to the Gaussian case with a power constraint. Roughly speaking, our idea is to ``quantize'' at the decoder but not at the encoder. 
We choose a conditional probability density function (PDF) of $U$ given $S$ and a mapping from $(U,S)$ to $X$ via the following:
\begin{align}
X^*&\sim \calN(0, P^*), \mbox{ independent of } S\\ 
U&=X^*+ \frac{P^*}{P^*+1}(1-\gamma^*)S \\
X&= U - \frac{P^*+\gamma^*}{P^*+1} S = X^*-\gamma^* S.
\end{align} 
We then employ the same encoding procedure as in Theorem~\ref{thm:achie}, except that PMFs are now replaced by PDFs. Next, as an additional step for the encoder, we fix some small positive $\epsilon$ and check whether the resulting input sequence $x^n$ satisfies the power constraint
\begin{equation}\label{eq:extrapower}
\sum_{i=1}^n x_i^2 \le n (P+\epsilon)
\end{equation}
or not. Denote by $\mathcal{E}_4$ the event that \eqref{eq:extrapower} is \emph{not} satisfied. When $\mathcal{E}_4$ occurs, we replace $x^n$ by the all-zero sequence. By similar analysis as in \cite{7707384} one can show that the probability of $\mathcal{E}_4$ tends to zero as $n$ tends to infinity for all positive $\epsilon$.
Clearly, in the limit where $\epsilon$ approaches zero, our encoding scheme above satisfies the given power constraint.

For covertness analysis, we adapt the proof for the DMC case as follows. Let $\widehat{P}_{Z^n}$ denote the distribution at the warden generated by the above coding scheme, and let $\bar{P}_{Z^n}$ denote the distribution generated by this scheme but \emph{without the additional step} of replacing those codewords not satisfying \eqref{eq:extrapower} with the all-zero sequence. It is clear from the proof of Lemma~\ref{lemma:zz} that it can be applied to PDFs without a maximum-cost constraint, so we can write, similarly to \eqref{eqn:ZZ_TV}, that
\begin{equation}\label{eq:Gausscovert1}
\E_{\calC}\|\bar{P}_{Z^n}-Q_0^{\times n}\|_{\mathrm{TV}}\overset{n\rightarrow \infty} \longrightarrow 0.
\end{equation}
Next fix a codebook $\mathcal{C}$ and let $P_1$ denote the distribution resulting from conditioning the corresponding $\bar{P}_{Z^n}$ on the event $\bar{\mathcal{E}}_4$, and $P_2$ that on $\mathcal{E}_4$. Then $\bar{P}_{Z^n} = (1-P(\mathcal{E}_4|\mathcal{C}))P_1 + P(\mathcal{E}_4|\mathcal{C}) P_2$ and $\widehat{P}_{Z^n} = (1-P(\mathcal{E}_4|\mathcal{C}))P_1 + P(\mathcal{E}_4|\mathcal{C}) Q_0^{\times n}$. We have
\begin{IEEEeqnarray}{rCl}
\| \widehat{P}_{Z^n}-Q_0^{\times n}\|_{\mathrm{TV}} & = & (1-P(\mathcal{E}_4|\mathcal{C})) \| P_1 - Q_0^{\times n}\|_\mathrm{TV}\\
& \le & \| P_1 - Q_0^{\times n}\|_\mathrm{TV}.\label{eq:Gausscovert2}
\end{IEEEeqnarray}
On the other hand
\begin{IEEEeqnarray}{rCl}
\| \bar{P}_{Z^n} - Q_0^{\times n}\|_\mathrm{TV} & \ge & (1-P(\mathcal{E}_4|\mathcal{C})) \| P_1 - Q_0^{\times n}\|_\mathrm{TV} - P(\mathcal{E}_4|\mathcal{C}) \| P_2 - Q_0^{\times n}\|_\mathrm{TV}\\*
& \ge &  \| P_1 - Q_0^{\times n}\|_\mathrm{TV} -2 P(\mathcal{E}_4|\mathcal{C}).\label{eq:Gausscovert3}
\end{IEEEeqnarray}
Combining \eqref{eq:Gausscovert2} and \eqref{eq:Gausscovert3} we obtain
\begin{equation}
\E_{\calC}\| \widehat{P}_{Z^n}-Q_0^{\times n}\|_{\mathrm{TV}} \le \E_{\calC}\|\bar{P}_{Z^n}-Q_0^{\times n}\|_{\mathrm{TV}}+2P(\mathcal{E}_4).\label{eq:compareTV}
\end{equation}
By \eqref{eq:Gausscovert1}, \eqref{eq:compareTV}, and the fact that $P(\mathcal{E}_4)$ tends to zero as $n\to\infty$, we know that the left-hand side of \eqref{eq:compareTV} tends to zero as $n\to\infty$. Because $\widehat{P}_{Z^n}$ is absolutely continuous with respect to $Q_0^{\times n}$, this further implies that (see \cite[Section VII-A]{7707384})
\begin{equation}
D\left(\left.\widehat{P}_{Z^n}\right\|Q_0^{\times n}\right) \overset{n\rightarrow \infty} \longrightarrow 0.
\end{equation}

%
%For covertness analysis, we adapt the proof for the DMC case as follows. Let $\widehat{P}_{Z^n}$ denote the distribution at the warden generated by the above coding scheme, and let $\bar{P}_{Z^n}$ denote the distribution generated by this scheme but \emph{without the additoinal step} of replacing those codewords not satisfying \eqref{eq:extrapower} with the all-zero sequence. It is clear from the proof of Lemma~\ref{lemma:zz} that it can be applied to PDFs, so we can write, similarly to \eqref{eqn:ZZ_TV}, that
%\begin{equation}
%\E_{\calC}\|\bar{P}_{Z^n}-Q_0^{\times n}\|_{\mathrm{TV}}\overset{n\rightarrow \infty} \longrightarrow 0.
%\end{equation}
%Next note that replacing some input sequences by the zero vector corresponds to replacing the corresponding conditional output distributions at the warden by $Q_0^{\times n}$, implying
%\begin{equation}
%\E_{\calC}\| \widehat{P}_{Z^n}-Q_0^{\times n}\|_{\mathrm{TV}} \le \E_{\calC}\|\bar{P}_{Z^n}-Q_0^{\times n}\|_{\mathrm{TV}}.\label{eq:compareTV}
%\end{equation}
%Hence the left-hand side of \eqref{eq:compareTV} also tends to zero as $n$ tends to infinity. Because $\widehat{P}_{Z^n}$ is absolutely continuous with respect to $Q_0^{\times n}$, this further implies that (see \cite[Section VII-A]{7707384})
%\begin{equation}
%D\left(\left.\widehat{P}_{Z^n}\right\|Q_0^{\times n}\right) \overset{n\rightarrow \infty} \longrightarrow 0.
%\end{equation}

We next describe the decoder and analyze its probability of making an error. To this end, we first quantize the random variables $S$, $U$, and $Y$. A partition $\mathcal{P}$ of $\mathcal{U}$ is a finite collection of disjoint sets $P_i$ such that $\cup_i P_i=\mathcal{U}$. The quantization of $U$ by $\mathcal{P}$ is denoted as $[U]_{\mathcal{P}}$ and defined by 
\begin{align}
%P( [U]_{\mathcal{P}}=\sup P_i,  )=P(U\in P_i).\\
P\left( [U]_{\mathcal{P}}=\begin{cases}
\sup P_i & \mbox{if } \sup P_i < \infty\\
\inf P_i & \mbox{otherwise}
\end{cases}\right)=P(U\in P_i).
\end{align}
Similarly, $S$ (resp. $Y$) is quantized by partition $\tilde{\mathcal{P}}$ (resp. $\mathcal{P}'$) and its quantization is denoted by $[S]_{\tilde{\mathcal{P}}}$ (resp. $[Y]_{\mathcal{P}'}$). 
The decoder considers the above quantizations of the received sequence $y^n$ and every $u^n$ in the codebook, and performs typicality decoding as in the proof of Theorem~\ref{thm:achie}. The event $\mathcal{E}_4$ defined above, which has vanishing probability as $n\to\infty$, can be taken into account as an additional error event. Then the conditions \eqref{eqn:cover1} and \eqref{eqn:4rel} become
\begin{align}
R' & > I([U]_{\mathcal{P}};[S]_{\tilde{\mathcal{P}}})\\
R+R'&<I([U]_{\mathcal{P}};[Y]_{\mathcal{P}'}). 
\end{align} 
As we refine the partitions $\mathcal{P}$, $\tilde{\mathcal{P}}$, and $\mathcal{P}'$,  $I([U]_{\mathcal{P}};[S]_{\tilde{\mathcal{P}}})$ approaches $I(U;S)$ and $I([U]_{\mathcal{P}};[Y]_{\mathcal{P}'})$ approaches $I(U;Y)$ according to \cite[Section 8.6]{CoverThomas:06}.

We thus conclude that our coding scheme will succeed if \eqref{eqn:main_rate} and \eqref{eqn:main_key} hold for the chosen PDFs. Computing these expressions explicitly completes the achievability proof of Theorem~\ref{thm:Gaussian}.
\end{proof} 
%%%%%%%%%%%%%%%%%%%%%%%%%%%
\begin{proof}[Converse proof of Theorem \ref{thm:Gaussian}]  First, by examining the proof of Theorem~\ref{thm:converse}, we see that it also applies to the Gaussian channel. Fix the conditional distribution $P_{U|S}$ and the mapping $x(u,s)$ that achieve the maximum in \eqref{eqn:main_cv}. Recall that they satisfy $P_{Z}=Q_0$ and $\E[X^2]\leq P$.
Let $\tilde{P}:=\var(X)$ and $\Lambda:=\E[XS]$. It follows that
\begin{align}
I(U;Y)-I(U;S) &\leq I(U;Y,S)-I(U;S)\\
&=I(U;Y|S)\\
&\leq I(X,U;Y|S)\\
&\overset{(a)}=I(X;Y|S)\\
&=h(X+N_Y|S)-h(N_Y)  \\
&\overset{(b)}\leq \frac{1}{2}\log \left(1+\tilde{P}-\frac{\Lambda^2}{T}\right), \label{eqn:rate}
\end{align}
where $(a)$ is due to the Markov chain $U-(X,S)-Y$ and $(b)$ is from~\cite[Problem 2.7]{ElGamalKim:11}. Recall the condition $P_Z=Q_0$, which implies
\begin{align}
T+\sigma^2 & = \var(X+S+N_Z)\\
& = \tilde{P}+T+2\Lambda+\sigma^2,
\end{align}
therefore we must have $\Lambda=-\frac{\tilde{P}}{2}$. 
Hence \eqref{eqn:rate} implies
\begin{align}
I(U;Y)-I(U;S) &\leq \frac{1}{2}\log \left(1+\tilde{P}-\frac{\tilde{P}^2}{4T}\right).
\end{align}
Note that $\tilde{P}\leq P$ and 
\begin{align}
\argmax_{0\leq \tilde{P} \leq P } \left(\tilde{P}-\frac{\tilde{P}^2}{4T}\right) =\min\{P, 2T\}. 
\end{align}
Thus,  we have
\begin{align}
C\leq \frac{1}{2}\log \left(1+\min\{P, 2T\}-\frac{(\min\{P, 2T\})^2}{4T}\right),
\end{align}
which concludes the proof. 
\end{proof}

\begin{proof}[Proof of Theorem \ref{thm:Gaussian_causal}] 
We can adapt Theorem~\ref{thm:achie_causal} to the Gaussian case with a power constraint through a quantization argument that is similar to the one in the achievability proof of Theorem~\ref{thm:Gaussian}. 
By letting $V\sim \calN(0, P^*)$ and $X=V-\gamma^* S$ in Theorem~\ref{thm:achie_causal}, Theorem \ref{thm:Gaussian_causal} is proved. 
\end{proof}

%\begin{remark}\label{remark:gaussian}
% Hence, we conclude that for $R_K\geq 0$ and $B\geq 0$, the covert capacity with noncausal CSIT is lower-bounded as  
%\begin{align}
%C_{\mathrm{nc}}&\geq \max (I(U;Y)-I(U;S)) \label{eqn:main_rate}
%\end{align} 
%where the maximum is over conditional PDF $f_{U|S}$ and function $x(u,s)$ such that  $\E[b(X)]\leq B$, and  $I(U;Z)-I(U;Y) < R_K$.
%\end{remark}

\section{Concluding Remarks}\label{sec:conclusion}

We have shown that causal and noncausal CSI at the transmitter can sometimes help it to communicate covertly at a postive rate over channels which, without CSI, obey the ``square-root law'' for covert communications. Computable single-letter formulas for the maximum achievable covert-communication rate  (assuming that a sufficiently long key is available) have been derived. This work, from a different perspective to that of recent works \cite{chebakshichanjaggi14,7084182,SobersTowley:arxiv16}, shows that channel statistics unknown to the warden can help the communicating parties to communicate covertly.

There are many channels over which, even with the help of CSI, covert communication cannot have a positive rate   (Remark~\ref{rmk:nogain} contains simple examples). For some of these channels, CSI may help to improve the scaling constant of the maximum amount of information that can be covertly communicated with respect to the square root of the total number of channel uses. We have not investigated this possibility in the current paper.

So far, we have not been able to prove upper bounds on the minimum secret-key length required to achieve the covert capacity that match the lower bounds \eqref{eqn:main_key_causal} and \eqref{eqn:main_key}, except when the warden has a weaker channel than the intended receiver, in which case this length is zero. This key-length problem may be related to the secrecy capacity of the wiretap channel with causal or noncausal CSI at the transmitter \cite{6135500,GoldfeldCuffPermuter:arxiv16}, which, to the best of our knowledge, is not yet completely solved.

\end{document}